\newcommand{\pd}{\partial}
\newcommand{\bC}{{\mathbb C}}
 \newcommand{\bs}{{\mathbf s}}
\newcommand{\bR}{{\mathbb R}} 
\newcommand{\bZ}{{\mathbb Z}}
 \newcommand{\cD}{{\mathcal D}}
\newcommand{\cE}{{\mathcal E}}
\newcommand{\cM}{{\mathcal M}}
\newcommand{\half}{\frac{1}{2}}
\newcommand{\Mbar}{\overline{\cM}}
 \DeclareMathOperator{\Res}{Res}
\newtheorem{theorem/definition}{Theorem/Definition}[section]
\newtheorem{thm}{Theorem}
\newtheorem{prop}{Proposition}[section]
\newtheorem{lm}{Lemma}[section]
\newtheorem{corollary}{Corollary}[section]
\newtheorem{rmk}{Remark}
\theoremstyle{remark}
\newtheorem{remark}{Remark}[section]
\theoremstyle{definition}
 \newtheorem{example}{Example}[section]
\newcommand{\be}{\begin{equation}}
\newcommand{\ee}{\end{equation}}
\newcommand{\bea}{\begin{eqnarray}}
\newcommand{\ben}{\begin{eqnarray*}}
\newcommand{\een}{\end{eqnarray*}}
\newcommand{\eea}{\end{eqnarray}}
\newcommand{\bet}{\begin{equation}
\begin{split}}
\newcommand{\eet}{\end{split}
\end{equation}}
\definecolor{yellow}{rgb}{1,1,0}
\definecolor{orange}{rgb}{1,.7,0}
\definecolor{red}{rgb}{1,0,0} \definecolor{green}{rgb}{0,1,1}
\definecolor{white}{rgb}{1,1,1}
\definecolor{A}{rgb}{.75,1,.75}
\newcommand{\corr}[1]{\langle {#1} \rangle}
\newcommand{\Corr}[1]{\biggl\langle {#1} \biggr\rangle}
\begin{document}

\title
{Emergent Geometry of Matrix Models with Even Couplings}

\author{Jian Zhou}
\address{Department of Mathematical Sciences\\
Tsinghua University\\Beijing, 100084, China}
\email{jianzhou@mail.tsinghua.edu.cn}

\begin{abstract}
We show that to the modified GUE partition function with even coupling introduced by 
Dubrovin, Liu, Yang and Zhang, one can associate $n$-point correlation functions
in arbitrary genera which satisfy Eynard-Orantin topological recursions.
Furthermore,
these $n$-point functions are related to intersection numbers on the Deligne-Mumford moduli spaces.    
\end{abstract}
\maketitle

\section{Introduction}

In the early 1990s there appeared two different approaches to 2D topological gravity.
The first one is via the double scaling limits of large N Hermitian one-matrix models.
In this approach a connection to KdV hierarchy and Virasoro constraints \cite{DVV} is established. 
Witten \cite{Witten} introduced another approach via the intersection theory on
the Deligne-Mumford moduli spaces $\Mbar_{g,n}$ of stable algebraic curves \cite{Witten}.
He conjectured a famous connection between intersection numbers on $\Mbar_{g,n}$
and KdV hierarchy and Virasoro constraints
based on the conjectural equivalence between these two approaches.
Of course,
at the same time,
a connection between matrix models and intersection numbers of moduli spaces $\Mbar_{g,n}$ of Riemann surfaces
was suggested.
In the mathematical literature,
a connection between matrix models and the orbifold Euler characteristics of $\cM_{g,n}$ 
appeared earlier in the works of Harer-Zagier \cite{Har-Zag} and Penner \cite{Penner}. 

Witten's conjecture was proved by Kontsevich \cite{Kontsevich} by introducing 
a different kind of matrix models, now called the Kontsevich model.
His result establishes a connection between Kontsevich model
with intersection numbers on $\Mbar_{g,n}$.
We are interested in establishing a connection between the original 
Hermitian one-matrix models 
and intersection theory on $\Mbar_{g,n}$, without taking any double scaling limit.

A  result of this type has already appeared recently in the work of Dubrovin, Liu, Yang, and Zhang \cite{DLYZ},
and so we will first examine their results in this paper to test our idea for the general case.
They considered some modified partition function of Hermitian one-matrix model
with only even couplings and identified with the generating series of some special Hodge integrals.
Our goal is to consider the Hermitian one-matrix model with all possible couplings.
We will achieve this goal in a subsequent paper which is modelled on this paper.

The method of \cite{DLYZ} is as follows.
Starting with the Virasoro constraints for GUE partition functions with all possible couplings,
they derived the Virasoro constraints for modified partition function of Hermitian one-matrix model
with only even couplings.
On the other hand, 
they made a cleaver modifications of some Virasoro constraints for Hodge integrals derived by the author in an unpublished
paper to get the same Virasoro constraints for some special Hodge integrals.  
The success of this method depending on knowing the intersection numbers (in this case, 
some Hodge integrals) to be identified with.
In this paper we will use an approach without this knowledge a priori.  
This will be particularly useful in our subsequent paper when 
we study Hermitian one-matrix model with all couplings,
without knowing in advance what type of intersections numbers will be involved. 

We will start with the Virasoro constraints derived in \cite{DLYZ} for modified GUE partition function with even couplings.
Our approach is based on the topological recursions developed by Eynard-Orantin \cite{EO},
its connection with intersection numbers discovered by Eynard \cite{Eynard}
inspired by results in the theory of matrix models,
and the idea of emergent geometry from Virasoro constraints developed by the author \cite{Zhou2, Zhou3}. 
Our key result (Theorem \ref{Thm:Main1}) is that not only the spectral curve,
but also the process of topological recursions, emerge from the Virasoro constraints.
This is exactly in the same spirit as our treatment of  the case of Witten-Konsevich tau-function 
in \cite{Zhou1}.
The spectral curve and the Bergman kernel emerge as a result of studying the genus zero one-point function
and two-point functions respectively,
and the Eynard-Orantin topological recursions emerge as one reformulate the Virasoro 
constraints in  terms of residues on the spectral curve. 
As a result,
we can use Eynard's result to relate the corresponding $n$-point functions to intersection numbers
(cf. Theorem \ref{Thm:Main2}).
We leave the problem of rederiving the result of \cite{DLYZ}
from our approach to future investigations.

As for the case of Hermitian one-matrix model with all couplings,
the emergence of the spectral curve 
has been addressed in \cite{Zhou3}.
In \cite{Zhou4} we will discuss the corresponding emergence of the topological recursions. 
In that case since the spectral curve has two branchpoints,
we will use the generalization of \cite{Eynard} made by Eynard himself in \cite{Eynard2}
to relate to intersection numbers.

\section{Eynard-Orantin Topological Recursions and Intersections on $\Mbar_{g,n}$}
\label{sec:Eynard}

In this Section we recall the $n$-point multilinear differentials $\omega_{g,n}$
obtained from Eynard-Orantin topological recursions
and their  relationship  with the intersections numbers on $\Mbar_{g,n}$.

\subsection{Eynard-Orantin topological recursions}

Recall a  {\em spectral curve} is a parameterized curve
\begin{align}
x& =x(z), & y & =y(z)
\end{align}
together with a Bergman kernel $B(z_1,z_2)$ with the following property
\be
B(z,z') \sim \biggl(\frac{1}{(z-z')^2}  + O(1)\biggr) dz dz'.
\ee
For the purpose of this work,
assume that $x$ has only one  nondegenerate critical point $a$:
\begin{align}
x'(a) & = 0, & x''(a) \neq 0.
\end{align}
Near $a$ there is an involution $\sigma$ such that $x(\sigma(z)) = x(z)$.
A sequence $\omega_{g,n}(z_1, \dots , z_n)$ of multidifferential with $n \geq 1$
and $g \geq 0$ are defined by Eynard-Orantin \cite{EO} as follows:
\ben
&& \omega_{0,1}(z) = y(z) dx(z), \\
&& \omega_{0,2}(z, z_0) = B(z, z_0), \
\een
and for $2g - 2 + n > 0$,
\be \label{eqn:EO}
\begin{split}
& \omega_{g,n+1}(z_0,z_1, \dots, z_n) =
\Res_{z\to a}
K(z_0, z) \biggl[ \omega_{g-1, n+2}(z, \sigma(z), z_{[n]}) \\
+ & \sum^g_{h=0} \sum_{I \subset [n]}
\omega_{h,|I|+1}(z, z_I) \omega_{g-h, n-|I|+1}(\sigma(z), z_{[n]-I})
\biggr],
\end{split}
\ee
the recursion kernel $K$ near $a$  is:
\be
K(z_0, z) =
 \frac{\int_{z'=\sigma(z)}^{z} B(z_0, z')}{2(y(z)-y(\sigma(z))) dx(z)}.
\ee

\subsection{Expansions near the branch point}

Near the branch point $z=a$,
one can introduce a new local coordinate $\zeta$:
\be
\zeta(z) = \sqrt{x(z) - x(a)},
\ee
In other words,
\be
x = x(a) +  \zeta^2.
\ee
This local coordinate is called the {\em local Airy coordinate}.

With the introduction of the local Airy coordinates,
one can express everything involved in the Eynard-Orantin recursion in terms of it.
For $z$ near $a$,
let $\sigma(z)$ denote the unique point near $a$ such that
\be
\zeta(\sigma(z)) = - \zeta(z).
\ee
The function $y$ can be expanded in Taylor series:
\be
y(z) \sim \sum_{k=0} t_{k+2} \zeta^k.
\ee
Similarly, the Bergman kernel can be expanded:
\be \label{eqn:B-Expansion}
B(z,z')
= \biggl[ \frac{1}{(\zeta-\zeta')^2}
+ \sum_{k, l} B_{k,l}   \zeta(z_1)^k \zeta(z_2)^l \biggr]
d\zeta(z_1) d\zeta(z_2).
\ee

The differential $d\zeta_k(z)$ is defined by:
\be
d\zeta_k(z) = - \frac{(2k - 1)!!}{2^k} \Res_{z' \to a} B(z, z') \frac{1}{\zeta(z')^{2k+1}}.
\ee
From the expansion \eqref{eqn:B-Expansion} one gets:
\be \label{eqn:d-zeta-k}
d\zeta_k(z) = - \frac{(2k + 1)!! d\zeta(z)}{2^k \zeta(z)^{2k+2}}
- \frac{(2k - 1)!!}{2^k} \sum_l B_{2k,l} \zeta(z)^l d\zeta(z),
\ee
therefore,
$d\zeta_k(z)$ is the differential of the function $\zeta_k(z)$ defined by:
\be \label{eqn:zeta-k}
\zeta_k(z) = \frac{(2k - 1)!!}{2^k} \biggl(
\frac{1}{\zeta(z)^{2k+1}}
- \sum_l B_{2k, l}  \frac{\zeta(z)^{l+1}}{l + 1}  \biggr).
\ee

\subsection{Laplace transform and intersection numbers}

To make connection with intersection numbers on $\Mbar_{g,n}$,
one needs to perform the Laplace transforms of $\omega_{g,n}$.

The Laplace transform of the $1$-form $ \omega_{0,1} =ydx$ gives the times $\tilde{t}_k$:
\be
e^{- \sum_k \tilde{t}_k u^{-k}} = \frac{2u^{3/2}e^{ux(a)}}{\sqrt{\pi}}
\int_\gamma   e^{-ux} ydx,
\ee
where $\gamma$ is a steepest descent path from the branchpoint to $x = + \infty$, i.e.
$x(\gamma) - x(a) = \bR_+$.
More precisely
\ben
e^{- \sum_k \tilde{t}_k u^{-k}}
& = & \frac{2u^{3/2}e^{ux(a)}}{\sqrt{\pi}}
\int_{-\infty}^\infty    e^{-u(x(a) + \zeta^2) } \sum_{k \geq 0} t_{k+2} \zeta^kd \zeta^2 \\
& = & \sum_k
(2k + 1)!! t_{2k+3} u^{-k}.
\een

The double Laplace transform of the Bergman kernel  gives the kernel
\be
\hat{B}(u,v) = \sum_{k,l} \hat{B}_{k,l} u^k v^l
\ee
by
\be
\begin{split}
& \sum_{k,l} \hat{B}_{k,l} u^k v^l \\
= & \frac{(uu')^{1/2} e^{(u+v) x(a)}}{2\pi}
\int_{z\in \gamma} \int_{z' \in\gamma } e^{- ux(z)} e^{-u'x(z')}
\biggl(B(z, z') -  B_0(z, z')\biggr),
\end{split}
\ee
where the integral is regularized by substracting the  singular part with double pole
\ben
B^0(z_1, z_2) = \frac{dx(z_1)dx(z_2)}{4\sqrt{x(z_1) - x(a)} \sqrt{x(z_2) - x(a)}}
\frac{1}{(\sqrt{x(z_1) - x(a)} -  \sqrt{x(z_2) - x(a)})^2}.
\een
More concretely,
\be \label{eqn:hat-Bkl}
\hat{B}_{k,l} = (2k - 1)!! (2l - 1)!! 2^{-k-l-1} B_{2k,2l}.
\ee

The main result of Eynard \cite{Eynard} is that
the Laplace transforms of $\omega_{g,n}$ when $2g-2+n > 0$ are given by
intersection numbers on $\Mbar_{g,n}$:
\be
\begin{split}
& \prod_{i=1}^n  \sqrt{\frac{\mu_i}{\pi}} e^{\mu_i x(a)}
\int_{z_1\in \gamma} \cdots \int_{z_2 \in \gamma}
\prod_{i=1}^n e^{-\mu_ix(z_i)} \omega_{g,n}(z_1, \dots, z_n) \\
= & 2^{3g-3+2n} \Corr{\prod_{i=1}^n \hat{B}(\mu_i, 1/\psi_i) e^{1/2 \sum_{\delta} \sum_{k,l} \hat{B}_{k,l}
\iota_{\delta*} \psi^k \psi'^l} e^{\sum_k \tilde{t}_k\kappa_k}}_{g,n}
\end{split}
\ee
Equivalently,
\be \label{eqn:TR-Int}
\begin{split}
& \omega_{g,n}(z_1, \dots, z_n) \\
= & 2^{3g-3+n} \sum_{d_1+\cdots +d_n \leq 3g-3+n}
\prod_i
d\zeta_{d_i}(z_i) \Corr{e^{\frac{1}{2} \sum_\delta \iota_{\delta*} \hat{B}(\psi, \psi')}
e^{\sum_k \tilde{t}_k \kappa_k} \prod_i \psi_i^{d_i}}_{g,n}
\end{split}
\ee
When $n=0$:
\be
F_g = 2^{3g-3}  \Corr{e^{\frac{1}{2} \sum_\delta \iota_{\delta*} \hat{B}(\psi, \psi')}
e^{\sum_k \tilde{t}_k \kappa_k}}_{g,0}
\ee

The following are some examples from \cite{Eynard}:
\be \label{eqn:omega03}
\omega_{0,3} (z_1, z_2, z_3) = e^{\tilde{t}_0} d\zeta_0(z_1) d\zeta_0(z_2) d\zeta_0(z_3)
= \frac{1}{2t_3}
d\zeta_0(z_1) d\zeta_0(z_2) d\zeta_0(z_3).
\ee
\be \label{eqn:omega11}
\omega_{1,1}(z) = \frac{1}{24 t_3} d\zeta_1(z)
+ \biggl(\frac{B_{0,0}}{4t_3}- \frac{t_5}{16t_3}
\biggr) d\zeta_0(z),
\ee
\be \label{eqn:omega04}
\begin{split}
\omega_{0,4}(z_1, z_2, z_3, z_4)
=& \frac{1}{2t^2_3}
(d\zeta_1(z_1)d\zeta_0(z_2)d\zeta_0(z_3)d\zeta_0(z_4)
+ perm.) \\
+ & \frac{3(B_{0,0} - t_5/t_3)}{4t_3^2}
d\zeta_0(z_1)d\zeta_0(z_2)d\zeta_0(z_3)d\zeta_0(z_4).
\end{split}
\ee

The following  are some well-known special cases from \cite{Eynard}.

\begin{example}
When the spectral curve is the Airy curve given by the parameterization
\begin{align*}
x& =\half z^2,  & y=z,
\end{align*}
with the Bergman kernel
$$B(x_1,x_2) = \frac{dx_1dx_2}{(x_1-x_2)^2},$$
one has
\be
\omega_{g,n}(z_1, \dots, z_n)
= (-1)^n \sum_{d_1+\cdots +d_n =3g-3+n} \prod_i
\frac{(2d_i + 1)!! d\zeta_i}{\zeta_i^{2d_i+2}}
\cdot \Corr{\prod_i \psi_i^{d_i}}_{g,n}.
\ee
The Eynard-Orantin topological recursions in this case
are equivalent to the DVV Virasoro constraints for Witten-Kontsevich tau-function \cite{Zhou1}.
\end{example}

\begin{example}

When the spectral curve is the Airy curve given by the parameterization
\begin{align*}
x& =\half z^2,  & y=z+ \sum_{k=2}^\infty u_k z^k,
\end{align*}
with the Bergman kernel
$$B(x_1,x_2) = \frac{dx_1dx_2}{(x_1-x_2)^2},$$
one has
\be
\begin{split}
 & \omega_{g,n}(z_1, \dots, z_n) \\
= & (-1)^n \sum_{d_1+\cdots +d_n =3g-3+n} \prod_i
\frac{(2d_i + 1)!! d\zeta_i}{\zeta_i^{2d_i+2}}
\cdot \Corr{\exp \sum_{n=1}^\infty s_n \kappa_n \cdot \prod_i \psi_i^{d_i}}_{g,n},
\end{split}
\ee
where the parameters $\{s_n\}_{n \geq 1}$ are related to $\{u_k\}_{k \geq 2}$ by
\be
\exp (-\sum_{n=1}^\infty s_n z^n) = \sum_{m \geq 1} (2m+1)!! u_{2m+1} z^{2m}.
\ee
The Eynard-Orantin topological recursions in this case 
are related to higher Weil-Peterssson volumes in \cite{Eynard0}.
They are shown to be equivalent to the  Virasoro constraints for higher Weil-Petersson volumes 
proved by Mulase-Safnuk \cite{MS} and Liu-Xu \cite{LX}
in \cite{Zhou1}.
\end{example}

\section{Computing Correlation Functions of Modified Hermitian One-Matrix Model with Even Coupling Constants by Virasoro Constraints}

\label{sec:Correlation}

In this Section we define the $n$-point functions of the modified partition function
of Hermitian one-matrix model and derive an algorithm to compute them
by Virasoro constraints.

\subsection{Modified GUE partition function and modified Virasoro constraints}

Let $Z_{even}$ denote the GUE partition function with even couplings.
Define $\tilde{Z}$ by
\be
\log Z_{even} = \big( \Lambda^{1/2} + \Lambda^{-1/2} \big)  \log \tilde{Z}.
\ee 
Then $\tilde{Z}$ satisfies the followings system of equations
which are called the Virasoro constraints:
\bea
&& \half \frac{\pd F_g}{\pd s_2} =  \sum_{k\geq 1}
k s_{2k} \frac{\pd F_g}{\pd s_{2k}} + \frac{t^2\delta_{g,0}}{4}- \frac{\delta_{g,1}}{16},  \label{eqn:L0} \\
&& \half \frac{\pd F_g}{\pd s_{2n+2}} =
\sum^{n-1}_{k=1} \frac{\pd F_g}{\pd s_{2k}}\frac{\pd F_g}{\pd s_{2n-2k}}
+ \sum^{n-1}_{k=1} \frac{\pd^2 F_g}{\pd s_{2k} \pd s_{2n-2k}}
+ t \frac{\pd F_g}{\pd s_{2n}} +  \sum_{k\geq 1}
ks_{2k} \frac{\pd F_g}{\pd s_{2k+2n}},
\eea
$n \geq 1$,
where 
\be
\log \tilde{Z} = \sum_{g\geq 0} \epsilon^{2g-2} F_g.
\ee
In the above $t=N\epsilon$, and $\Lambda = e^{\epsilon \pd_t}$.
See \cite{DLYZ} for notations and the proof of the above results.

\subsection{Virasoro constraints in terms of  correlators}

Define the $n$-point correlators by
\be
\corr{p_{2a_1} \cdots p_{2a_n} }^c_g
: =\frac{\pd^n}{\pd s_{2a_1} \cdots \pd s_{2a_n}}F_g\biggl|_{s_{2k}=0, k\geq 1}.
\ee

Then the equation  \eqref{eqn:L0} can be written in terms of the correlators as follows:
\be
\half \corr{p_2 \cdot p_{2a_1} \cdots p_{2a_n} }^c_g =
\sum_{j=1}^n a_j \cdot
\corr{p_{2a_1} \cdots p_{2a_n}}_{g},
\ee
together with initial value:
\begin{align}
\corr{p_2}_0 & = \frac{t^2}{2}, &
\corr{p_2}_1 & = - \frac{1}{8}.
\end{align}
And for $m \geq 1$,
\be \label{eqn:P2m}
\begin{split}
& \half\corr{p_{2m+2} \cdot p_{2a_1} \cdots p_{2a_n}}^c_{g}
=\sum_{j=1}^n a_j \cdot \corr{p_{2a_1} \cdots
p_{2a_j+2m} \cdots p_{2a_n} }_{g}^c  \\
& \qquad + t \corr{p_{2m} \cdot p_{2a_1} \cdots p_{2a_n}}^c_{g}
+ \sum_{k=1}^{m-1} \corr{p_{2k}p_{2m-2k} \cdot
p_{2a_1} \cdots p_{2a_n} }^c_{g-1}  \\
& \qquad + \sum_{k=1}^{m-1} \sum_{\substack{g_1+g_2=g\\I_1 \coprod I_2 = [n]}}
\corr{p_{2k} \cdot \prod_{i\in I_1} p_{2a_i}}^c_{g_1}
\cdot \corr{p_{2m-2k} \cdot \prod_{i\in I_2} p_{2a_i}}^c_{g_2},
\end{split}
\ee
where $[n]=\{1, \dots, n\}$.

For example,
\ben
&& \corr{p_2^2}_0^c = 2 \corr{p_2}_0^c = t^2, \\
&& \corr{p_2^3}_0^c = 2 \cdot 2 \corr{p_2 p_2}_0^c = 4 t^2, \\
&& \corr{p_4}_0^c = 2t \corr{p_2}_0^c = t^3.
\een

Now we define the genus $g$, $n$-point functions by:
\be
G_{g,n}(z_1, \dots, z_n) : = \sum_{a_1, \dots,a_n \geq 0} 
\corr{p_{2a_1} \cdots p_{2a_n}}_{g,n}^c \frac{1}{z_1^{a_1+1}} \cdots \frac{1}{z_n^{a_n+1}}.
\ee
Our goal is to compute these functions and interpret the result gemetrically.

\subsection{Computation of genus zero one-point function by Virasoro constraints}

From
\ben
&& \corr{p_2}_0^c = t^2/2, \\
&& \corr{p_{2m}}_0^c = 2\sum_{k=1}^{m-2}
\corr{p_{2k}}_0^c \cdot \corr{p_{2m-2-2k}}_0^c
+ 2 t \cdot \corr{p_{2m-2}}_{0}^c, \quad m \geq 2.
\een
we  get:
\ben
G_{0,1}(x) & = &   \frac{t^2}{2x^2}
+ \sum_{m \geq 2} \frac{1}{x^{m+1}} \biggl(2\sum_{k=1}^{m-2}
\corr{p_{2k}}_0^c \cdot \corr{p_{2m-2-2k}}_0^c
+ 2 t \cdot \corr{p_{2m-2}}_{0}^c \biggr) \\
& = & \frac{t^2}{2x^2} + \frac{2t}{x} G_{0,1}(x)
+2 G_{0,1}(x)^2.
\een
After solving for $G_{0,1}(x)$ we then get:
\be \label{eqn:G01}
G_{0,1}(x) = \frac{1}{4}
\biggl(1-\frac{2t}{x}
- \sqrt{\biggl(1-\frac{2t}{x}\biggr)^2-\frac{4t^2}{x^2}}\biggr)
= \frac{1}{4}
\biggl(1-\frac{2t}{x}
- \sqrt{1-\frac{4t}{x}}\biggr).
\ee
It is then easy to see that:
\be
G_{0,1}(x) = \frac{1}{4}
\sum_{n \geq 2} \frac{(2n-3)!!}{n!}
\cdot (2t)^{n}  x^{-n}
= \frac{1}{2} \sum_{n \geq 2} \frac{(2n-2)!}{(n-1)!n!} \frac{t^n}{x^n}.
\ee
The following are the first few terms of $G_{0,1}(x)$:
\ben
G_{0,1}(x)
& = & \frac{1}{2}\biggl(1 \cdot \frac{t^2}{x^2} + \frac{2t^3}{x^3}
+ \frac{5t^4}{x^4} + \frac{14 t^5}{x^5}
+ \frac{42t^6}{x^6} +  \frac{132t^7}{x^7} + \frac{429t^8}{x^8}+\cdots \biggr),
\een
where the sequence $1,2,5,24,42, 132, 429$ are the Catalan numbers.
In terms of he correlators,
we have shown that:
\be
\corr{p_{2n}}_0^c = \frac{1}{2} \frac{(2n)!}{n!(n+1)!} t^{n+1}.
\ee

\subsection{Computation of genus zero two-point function by Virasoro constraints}

By   equation \eqref{eqn:L0},
\ben
\corr{p_2 \cdot p_{2n}  }^c_{0}  = 2n \cdot \corr{ p_{2n} }_{g}^c ,
\een
and so
\ben
&& \sum_{n=1}^\infty \corr{p_2 \cdot  p_{2n} }^c_{0} \frac{1}{x_2^{-n-1}}
=  \sum_{n \geq 2} 2n \cdot \corr{ p_{2n}   }_{0}^c  \frac{1}{x_2^{n+1}} \\
& = & - \frac{d}{dx_2} \biggl(2x_2 \sum_{n \geq 2} \corr{ p_{2n}}_{0}^c  \frac{1}{x_2^{n+1}} \biggr) \\
& = & -\frac{d}{dx_2} \biggl( \frac{x_2}{2}
 (1-\frac{2t}{x_2}- \sqrt{1-\frac{4t}{x_2}}) \biggr) \\
& = & \frac{1-\frac{2t}{x_2}}{2\sqrt{1-\frac{4t}{x_2}}} -\frac{1}{2}.
\een
It follows that
\ben
\sum_{n=1}^\infty \corr{p_2 \cdot  p_{2n} }^c_{0} \frac{1}{x_2^{-n-1}}
&=&\frac{t^2}{x_2^2}+\frac{4t^3}{x_2^3}
+\frac{15t^4}{x_2^4} + \frac{56t^5}{x_2^5}
+\frac{210t^6}{x_2^6} +\frac{792t^7}{x_2^7}
+\cdots,
\een
where the sequences of the numbers $1$, $4$, $15$, $56$,
$210$, $792$ are the sequence A001791  on \cite{OEIS},
they are given by
\be
a_n = \binom{2n}{n-1} = \frac{(2n)!}{(n-1)!n!}.
\ee
The second equation in the sequence can be written  as
\be
\corr{p_4 \cdot p_{2n} }^c_{0}
 =2 n \cdot \corr{ p_{2n+2} }_{0}^c
 + 2 t \cdot \corr{p_2 \cdot p_{2n}}^c_{0} .
\ee
By taking generating series,
one gets:
\ben
&& \sum_{n \geq 1} \corr{p_4 \cdot p_{2n} }^c_{0} \frac{1}{x_2^{n+1}} \\
& = & \sum_{n \geq 1} \frac{1}{x_2^{n+1}}
\biggl( 2n \cdot \corr{ p_{2n+2} }_{0}^c
 + 2t \cdot \corr{p_2 \cdot p_{2n}}^c_{0} \biggr) \\
& = & 2\sum_{n \geq 2} \frac{n-1}{x_2^n} \corr{ p_{2n} }_{0}^c
+ \sum_{n \geq 1} \frac{2t}{x_2^{n+1}} \cdot \corr{p_2 \cdot p_{2n}}^c_{0}  \\
& = & - 2\frac{d}{dx_2} \biggl(x_2^2 \sum_{n \geq 2} \frac{1}{x_2^{2n+2}} \corr{ p_{2n} }_{0}^c  \biggr)
+ 2t \sum_{n \geq 1} \corr{p_2 \cdot p_{2n}}^c_{0}  \frac{1}{x_2^{n+1}} \\
& = & - 2\frac{d}{dx_2} \biggl(x_2^2 \biggl(\frac{1}{4}
(1-\frac{2t}{x_2}- \sqrt{1-\frac{4t}{x_2}}) - \frac{t^2}{2x_2^2} \biggr) \biggr)
+2 t \biggl(\frac{1-\frac{2t}{x_2}}{2\sqrt{1-\frac{4t}{x_2}}} -\frac{1}{2} \biggr)  \\
& = & x_2 \biggl( \frac{1-\frac{2t}{x_2} - \frac{2t^2}{x_2^2}}{\sqrt{1-\frac{4t}{x_2}}} - 1 \biggr).
\een
Therefore,
\ben
&& \sum_{n \geq 1} \corr{p_4 \cdot p_{2n} }^c_{0} \frac{1}{z_2^{n+1}}  \\
& = & 2 \biggl( \frac{2t^3}{x^2}+\frac{9t^4}{x^3}
+\frac{36t^5}{x^4}+\frac{140t^6}{2^5x^5}
+\frac{540t^7}{x^6}+\frac{2079t^8}{x^7}
+\frac{8008t^9}{x^8}+\frac{30888t^{10}}{x^9}
+ \cdots \biggr),
\een
where the numbers $2$, $9$, $36$,
$140$, $540$, $2079$, $8008$, $30888$
are the sequence A007946 on \cite{OEIS},
and they are given by
\be
a_n = \frac{3(2n)!}{(n-1)!n!(n+2)}.
\ee

\begin{prop}
The following formula holds:
\be \label{eqn:G02}
 G_{0,2}(x_1,x_2)
=  \frac{1-\frac{2t}{x_1}-\frac{2t}{x_2}}
{2(x_1-x_2)^2\sqrt{(1-\frac{4t}{x_1})(1-\frac{4t}{x_2})}}
- \frac{1}{2(x_1-x_2)^2}.
\ee
\end{prop}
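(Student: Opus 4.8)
The plan is to turn the Virasoro constraint \eqref{eqn:P2m} for the two-point correlators into a closed functional equation for the generating series $G_{0,2}$, exactly as the genus-zero one-point computation turned \eqref{eqn:L0} into a quadratic equation for $G_{0,1}$. First I would specialize \eqref{eqn:P2m} to $n=1$, $a_1=a$, and genus $g=0$; the genus-lowering term drops out and the splitting sum reduces to the two terms $\corr{p_{2k}p_{2a}}^c_0\corr{p_{2m-2k}}^c_0$ and $\corr{p_{2k}}^c_0\corr{p_{2m-2k}p_{2a}}^c_0$. I would then multiply the whole identity by $x_1^{-m-2}x_2^{-a-1}$ and sum over $m,a\geq 1$, so that the power of $x_1$ is matched to the $G_{0,2}$-weight of the first index.

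Under this summation each term is recognizable. The left-hand side assembles into $\half G_{0,2}(x_1,x_2)$ minus the $m=1$ slice $\tfrac12 x_1^{-2}\sum_{a}\corr{p_2 p_{2a}}^c_0 x_2^{-a-1}$, and this last slice is precisely the generating function already computed in the subsection on the genus-zero two-point function. The term $t\corr{p_{2m}p_{2a}}^c_0$ yields $\tfrac{t}{x_1}G_{0,2}$, and the two splitting sums are Cauchy products in the first index that combine into $2G_{0,1}(x_1)G_{0,2}(x_1,x_2)$. The remaining term $\sum_{m,a\geq1}a\corr{p_{2(a+m)}}^c_0 x_1^{-m-2}x_2^{-a-1}$ is the crucial one: writing $a\,x_2^{-a-1}=-\tfrac{d}{dx_2}x_2^{-a}$, grouping by the total index $j=a+m$, and summing the resulting finite geometric series in $x_1/x_2$, it becomes an explicit elementary expression built from $G_{0,1}(x_1)$ and $G_{0,1}(x_2)$ through the difference quotient $\bigl(G_{0,1}(x_1)-G_{0,1}(x_2)\bigr)/(x_1-x_2)$. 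This is where the factor $1/(x_1-x_2)$, and after the $x_2$-differentiation the double pole $1/(x_1-x_2)^2$, first appears.

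Collecting everything gives a linear equation of the shape $\bigl(\half-\tfrac{t}{x_1}-2G_{0,1}(x_1)\bigr)G_{0,2}=(\text{explicit})$. The key simplification is that, by the defining relation $G_{0,1}=\tfrac{t^2}{2x^2}+\tfrac{2t}{x}G_{0,1}+2G_{0,1}^2$ together with the closed form \eqref{eqn:G01}, the coefficient collapses to $\half-\tfrac{t}{x_1}-2G_{0,1}(x_1)=\half\sqrt{1-\tfrac{4t}{x_1}}$, the genus-zero spectral-curve quantity. Dividing by this factor and simplifying the explicit right-hand side then yields \eqref{eqn:G02}. As a final check I would verify that the resulting expression is symmetric in $x_1\leftrightarrow x_2$, which the derivation treats asymmetrically, and that its leading coefficients reproduce the correlators computed above, for instance $\corr{p_2^2}^c_0=t^2$.

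I expect the main obstacle to be the bookkeeping in the third step: correctly extracting the difference-quotient expression from the diagonal sum $\sum a\corr{p_{2(a+m)}}^c_0$, keeping track of the low-order boundary terms (the $p_2$-slice and the range restrictions $m,a\geq1$), and carrying out the $x_2$-differentiation so that the subtracted bare double pole $-\tfrac{1}{2(x_1-x_2)^2}$ emerges with exactly the right coefficient. The algebra consolidating the explicit right-hand side after division by $\half\sqrt{1-4t/x_1}$ is routine but must be organized so as to exhibit the symmetric form \eqref{eqn:G02}.
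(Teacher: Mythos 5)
Your proposal is correct and follows essentially the same route as the paper: specialize the Virasoro constraint to one insertion, form the generating series, recognize the convolution terms as $2G_{0,1}(x_1)G_{0,2}(x_1,x_2)$ (the paper packages this together with the $t$-term as $4W_{0,1}(x_1)G_{0,2}$ using the convention $\corr{p_0}_0^c=t/2$), solve the resulting linear equation with coefficient $\tfrac12\sqrt{1-4t/x_1}$, and evaluate the diagonal sum $\sum a\corr{p_{2(a+m)}}_0^c$ by geometric series into the difference quotient of $G_{0,1}$. The only cosmetic difference is that you carry the $m=1$ ($L_0$) slice as a separately known boundary term while the paper folds it into one uniform sum; otherwise the two arguments coincide.
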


\begin{proof}
For $m \geq 4$,
\be
\corr{p_ {2m} \cdot  p_{2n} }^c_{0}
= 2n \cdot \corr{ p_{2n+2m-2} }_{0}^c
+ 4 \sum_{k=0}^{m-1}
\corr{p_{2k} \cdot p_{2n} }^c_{0}
\cdot \corr{p_{2m-2-2k} }^c_{0},
\ee
where we use the following convention:
\be
\corr{p_0}_0^c  = \frac{t}{2},
\ee
Therefore,
\ben
G_{0,2}(x_1, x_2)
& = &  \sum_{m,n \geq 1}\corr{p_ {2m} \cdot  p_{2n} }^c_{0} \frac{1}{x_1^{m+1}} \frac{1}{x_2^{n+1}} \\
& = & \sum_{m+n \geq 2} 2n \cdot \corr{ p_{2n+2m-2} }_{0}^c \frac{1}{x_1^{m+1}} \frac{1}{x_2^{n+1}}  \\
& + & 4 \sum_{m \geq 2,n \geq 1}   \frac{1}{x_1^{m+1}} \frac{1}{x_2^{n+1}}
\sum_{k=0}^{m-1}
\corr{p_{2k} \cdot p_{2n} }^c_{0}
\cdot \corr{p_{2m-2-2k} }^c_{0} \\
& = &
2 \sum_{m+n \geq 2} n \cdot \corr{ p_{2n+2m-2} }_{0}^c  \frac{1}{x_1^{m+1}} \frac{1}{x_2^{n+1}}  \\
& + &   4   \sum_{n,k \geq 1}
\frac{1}{x_1^{k+1}} \frac{1}{x_2^{n+1}}
\corr{p_{2k} \cdot p_{2n} }^c_{0}
\cdot \sum_{j \geq 0} \corr{p_{2j} }^c_{0}
\frac{1}{x_1^{j+1}} \\
& = & 2 \sum_{m+n \geq 2} n \cdot \corr{ p_{2n+2m-2} }_{0}^c \frac{1}{x_1^{m+1}} \frac{1}{x_2^{n+1}}  \\
& + & 4 W_{0,1}(z_1) \cdot W_{0,2}(z_1,z_2),
\een
where $W_{0,1}(z_1)$ is defined by:
\be
W_{0,1}(x_1)
= \frac{t}{2x_1} + G_{0,1}(z_1)
= \sum_{j \geq 0} \corr{p_{2j} }^c_{0}
\frac{1}{x_1^{j+1}}
= \frac{1}{4} \biggl(1 - \sqrt{1-\frac{4t}{x_1}}\biggr).
\ee
So we have:
\ben
G_{0,2}(z_1, z_2)
& = & \frac{2}{1-4W_{0,1}(x_1)}
 \cdot \sum_{l \geq 1}
\corr{ p_{2l} }_{0}^c
\sum_{n=1}^{l} \frac{n}{x_1^{l+2-n}}
\frac{1}{x_2^{n+1}}  \\
& = & \frac{2}{\sqrt{1-\frac{4t}{x_1}}}
  \sum_{l \geq 1}
\corr{ p_{2l} }_{0}^c
\sum_{n=1}^{l} \frac{n}{x_1^{l+2-n}}
\frac{1}{x_2^{n+1}} .
\een

Furthermore,
\ben
&&  \sum_{l \geq 1}
\corr{ p_{2l} }_{0}^c
\sum_{n=1}^{l} \frac{n}{x_1^{l+2-n}}
\frac{1}{x_2^{n+1}} \\
& = &  \sum_{l \geq 1}
\corr{ p_{2l} }_{0}^c  \frac{1}{x_1^{l+1}x_2^2}
\sum_{n=1}^{l} \frac{n}{x_1^{1-n}}
\frac{1}{x_2^{n-1}} \\
& = &   \sum_{l \geq 1}
\corr{ p_{2l} }_{0}^c  \frac{1}{x_1^{l+1}x_2^2}
\biggl(
\frac{1-(\frac{x_1}{x_2})^{l}}{(1-\frac{x_1}{x_2})^2}
-\frac{l (\frac{x_1}{x_2})^{l}}{1- \frac{x_1}{x_2}}
\biggr)  \\
& = & \frac{1}{x_1(x_1-x_2)^2} \sum_{l \geq 1}
\corr{ p_{2l} }_{0}^c
\biggl(\frac{1}{x_1^{l}}-\frac{1}{x_2^{l}} \biggr)
+ \frac{1}{x_1(x_1-x_2)}
\sum_{l=1}^\infty \corr{p_{2l}}_0^c  \cdot \frac{l}{x_2^{l+1}}  \\
& = &  \frac{1}{x_1(x_1-x_2)^2}
(x_1G_{0,1}(x_1) - x_2 G_{0,1}(x_2))
- \frac{1}{x_1(x_1-x_2)} \frac{d}{dx_2}(x_2 G_{0,1}(x_2)) \\
& = & \frac{1-\frac{2t}{x_1}-\frac{2t}{x_2}
- \sqrt{(1-\frac{4t}{x_1})(1-\frac{4t}{x_2})}}
{4(x_1-x_2)^2\sqrt{(1-\frac{4t}{x_2})}},
\een
where in the last equality we have used \eqref{eqn:G01}.
Therefore,
we have  proved \eqref{eqn:G02}.
\end{proof}

\begin{corollary}
The two-point function in genus zero has the following expansion:
\be \label{eqn:G02-2}
 G_{0,2}(x_1,x_2)
= 2 \sum_{l=2}^\infty  \frac{t^l}{l }
\sum_{m+n=l-2}  \frac{(2m+1)!}{(m!)^2}
\cdot   \frac{(2n+1)!}{(n!)^2}
\frac{1}{x_1^{m+2}x_2^{n+2}},
\ee
In terms of correlators we have
\be
\corr{p_{2m}p_{2n} }_0^c = \frac{1}{2} \cdot \frac{(2m)!}{(m-1)!m!} \cdot \frac{(2n)!}{(n-1)!n!} \cdot \frac{t^{m+n}}{m+n}.
\ee
\end{corollary}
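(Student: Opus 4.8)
The plan is to pass from the closed form \eqref{eqn:G02} to the double series \eqref{eqn:G02-2} by first simplifying the closed form into a manifestly regular expression, and then matching both sides through a single ordinary differential equation in the variable $t$. Introduce the shorthand $\phi_i := \sqrt{1-4t/x_i}$ and rewrite the numerator of \eqref{eqn:G02} using $1-\frac{2t}{x_1}-\frac{2t}{x_2}=\frac12(\phi_1^2+\phi_2^2)$. Since $\phi_1^2+\phi_2^2-2\phi_1\phi_2=(\phi_1-\phi_2)^2$, the whole expression collapses to
\[
G_{0,2}(x_1,x_2)=\frac{(\phi_1-\phi_2)^2}{4\,\phi_1\phi_2\,(x_1-x_2)^2},
\]
which is symmetric and, because $(\phi_1-\phi_2)/(x_1-x_2)$ is a regular divided difference, has no pole on the diagonal $x_1=x_2$.

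Next I would record the elementary generating function
\[
\sum_{m\geq 0}\frac{(2m+1)!}{(m!)^2}\,w^m=(1-4w)^{-3/2},
\]
obtained from $\sum_m\binom{2m}{m}w^m=(1-4w)^{-1/2}$ by applying $w\frac{d}{dw}$ and combining, using $(2m+1)\binom{2m}{m}=\frac{(2m+1)!}{(m!)^2}$. Denote by $S(x_1,x_2)$ the right-hand side of \eqref{eqn:G02-2} and set $l=m+n+2$; the factor $\frac1l$ is the only obstruction to factorizing the double sum. Differentiating in $t$ removes it, since $\frac{\partial}{\partial t}\frac{t^l}{l}=t^{\,m+n+1}$ decouples the two indices, and the generating function above gives
\[
\frac{\partial S}{\partial t}=\frac{2t}{x_1^2x_2^2}\,\phi_1^{-3}\phi_2^{-3}.
\]

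It then remains to verify that the simplified closed form satisfies the same equation with the same initial value. One has $S|_{t=0}=0$ (the series starts at $t^2$) and $G_{0,2}|_{t=0}=0$ (since $\phi_1=\phi_2=1$ there), so it suffices to show $\frac{\partial G_{0,2}}{\partial t}=\frac{2t}{x_1^2x_2^2}\phi_1^{-3}\phi_2^{-3}$. Using $\frac{\partial\phi_i}{\partial t}=-2/(x_i\phi_i)$ and clearing denominators, the numerator of $\frac{\partial}{\partial t}\big[(\phi_1-\phi_2)^2/(\phi_1\phi_2)\big]$ collapses to $2(\phi_1^2-\phi_2^2)(x_1\phi_1^2-x_2\phi_2^2)$. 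Since $x_i\phi_i^2=x_i-4t$ and $\phi_1^2-\phi_2^2=4t(x_1-x_2)/(x_1x_2)$, both factors are proportional to $x_1-x_2$, producing exactly the $(x_1-x_2)^2$ needed to cancel the denominator and leaving $\frac{\partial G_{0,2}}{\partial t}=\frac{2t}{x_1^2x_2^2}\phi_1^{-3}\phi_2^{-3}$. Hence $G_{0,2}=S$, which is \eqref{eqn:G02-2}. Finally, reading off the coefficient of $x_1^{-m-1}x_2^{-n-1}$ (so the summation indices are $m-1,n-1$ and $l=m+n$) yields the correlator, and the elementary identity $\frac{(2m)!}{(m-1)!\,m!}=\frac{2(2m-1)!}{((m-1)!)^2}$ reconciles the two factorial presentations.

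The main obstacle is this last differentiation: one must check that the apparent higher-order poles in the $\phi_i$ together with the spurious pole at $x_1=x_2$ conspire to cancel. The entire argument hinges on the factorization of the numerator as $2(\phi_1^2-\phi_2^2)(x_1\phi_1^2-x_2\phi_2^2)$, each factor carrying a copy of $x_1-x_2$; without this cancellation one would be left grinding through square-root derivatives with no visible simplification, so isolating this factorization is the key step.
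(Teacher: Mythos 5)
Your proposal is correct and follows essentially the same route as the paper: differentiate in $t$ to remove the $1/l$ factor, recognize the resulting product of $(1-4t/x_i)^{-3/2}$ series via $\sum_m \frac{(2m+1)!}{(m!)^2}w^m=(1-4w)^{-3/2}$, and integrate back (with vanishing initial value at $t=0$). Your preliminary rewriting of \eqref{eqn:G02} as $(\phi_1-\phi_2)^2/\bigl(4\phi_1\phi_2(x_1-x_2)^2\bigr)$ is a pleasant streamlining of the derivative computation that the paper simply asserts, but it does not change the argument.
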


\begin{proof}
By \eqref{eqn:G02} we have:
\ben
\frac{\pd}{\pd t} G_{0,2}(x_1, x_2)
& = & \frac{\pd}{\pd t} \biggl(  \frac{1-\frac{2t}{x_1}-\frac{2t}{x_2}}
{2(x_1-x_2)^2\sqrt{(1-\frac{4t}{x_1})(1-\frac{4t}{x_2})}}
- \frac{1}{2(x_1-x_2)^2} \biggr) \\
& = & \frac{2t}{x^2y^2} \frac{1}{(1-\frac{4t}{x_1})^{3/2} (1-\frac{4t}{x_2})^{3/2}} \\
& = & \frac{2t}{x^2y^2} \cdot \sum_{m=0}^\infty \frac{(2m+1)!}{(m!)^2} \frac{t^m}{x_1^m}
\cdot \sum_{n=0}^\infty \frac{(2n+1)!}{(n!)^2} \frac{t^n}{x_2^n} \\
& = & \frac{2}{x^2y^2} \cdot \sum_{l=2}^\infty t^{l-1} \sum_{m+n=l-2}  \frac{(2m+1)!}{(m!)^2}
\cdot   \frac{(2n+1)!}{(n!)^2} \frac{1}{x_1^mx_2^n}.
\een
Then \eqref{eqn:G02-2} is proved by integration.

\end{proof}

\subsection{Computation of $n$-point functions in arbitrary genera by Virasoro constraints}

By \eqref{eqn:P2m} we have:
\ben
&& G_{g,n}(x_0, x_1, \dots, x_n) \\
& = & \sum_{m, a_1, \dots, a_n \geq 1}
\corr{p_{2m} \cdot p_{2a_1} \cdots p_{2a_n}}^c_{g}  \frac{1}{x_0^{m+1}}\prod_{i=1}^n \frac{1}{x_i^{a_i+1}} \\
& = & 2\sum_{m, a_1, \dots, a_n \geq 1}
\sum_{j=1}^n a_j \cdot \corr{p_{2a_1} \cdots
p_{2a_j+2m-2} \cdots p_{2a_n}}_{g}^c
\frac{1}{x_0^{m+1}}\prod_{i=1}^n \frac{1}{x_i^{a_i+1}} \\
& + & \sum_{m \geq 2, a_1, \dots, a_n \geq 1} 2t
\corr{p_{2m-2} \cdot p_{2a_1} \cdots p_{2a_n} }^c_{0}
\frac{1}{x_0^{m+1}}\prod_{i=1}^n \frac{1}{x_i^{a_i+1}}\\
& + & 2 \sum_{k=1}^{m-2} \corr{p_{2k}p_{2m-2-2k} \cdot
p_{2a_1} \cdots p_{2a_n}}^c_{g-1}
\frac{1}{x_0^{m+1}}\prod_{i=1}^n \frac{1}{x_i^{a_i+1}} \\
& + & 2\sum_{k=1}^{m-2} \sum_{\substack{g_1+g_2=g\\I_1 \coprod I_2 = [n]}}
\corr{p_{2k} \cdot \prod_{i\in I_1} p_{a_i} }^c_{g_1}
\cdot \corr{p_{2m-2-2k} \cdot \prod_{i\in I_2} p_{2a_i} }^c_{g_2}
\frac{1}{x_0^{m+1}}\prod_{i=1}^n \frac{1}{x_i^{a_i+1}} \\
& - & \frac{1}{8} \delta_{g,1} \delta_{n,0} \frac{1}{x_0^2}.
\een
Denote by $I$, $II$, $III$, $IV$ and $V$ the five lines on the right-hand side of the second equality
respectively.

We now rewrite $I$ in the following fashion.
Note
\ben
&& \sum_{m, a_j \geq 1}  a_j \cdot
\corr{p_{2a_1} \cdots p_{2a_j+2m-2} \cdots p_{2a_n}}_{g}^c
\frac{1}{x_0^{m+1}} \frac{1}{x_j^{a_j+1}} \\
& = & \sum_{b_j \geq 1}  \sum_{\substack{m, a_j \geq 1\\ m+a_j = b_j+1}}
a_j \cdot \corr{p_{2a_1} \cdots p_{2b_j} \cdots p_{2a_n}}_{g}^c
\frac{1}{x_0^{m+1}} \frac{1}{x_j^{a_j+1}} \\
\een
This leads us to an operator
\be
\frac{1}{x_j^{l+1}} \mapsto
\sum_{\substack{m, n \geq 1\\ m+n = l+1}}
n \cdot  \frac{1}{x_0^{m+1}} \frac{1}{x_j^{n+1}}
\ee
For $l \geq 0$.
Because we have:
\ben
&& \sum_{\substack{m, n \geq 1\\ m+n = l+1}}
n \cdot  \frac{1}{x_0^{m+1}} \frac{1}{x_j^{n+1}} \\
& = & \sum_{n=1}^{l} \frac{n}{x_0^{l+2-n}}\frac{1}{x_j^{n+1}}
= \frac{1}{x_0^{l+1}x_j^2}
\sum_{n=1}^{l} \frac{n}{x_0^{1-n}}
\frac{1}{x_j^{n-1}} \\
& = &  \frac{1}{x_0^{l+1}x_j^2}
\biggl(
\frac{1-(\frac{x_0}{x_j})^{l}}{(1-\frac{x_0}{x_j})^2}
-\frac{l (\frac{x_0}{x_j})^{l}}{1- \frac{x_0}{x_j}}
\biggr)  \\
& = & \frac{1}{x_0(x_0-x_j)^2}
\biggl(\frac{1}{x_0^{l}}-\frac{1}{x_j^{l}} \biggr)
+ \frac{1}{x_0(x_0-x_j)}  \cdot \frac{l}{x_j^{l+1}},
\een
this operator can be realized by:
\be
D_{x_0,x_j}f(x_j)
=  \frac{x_0f(x_0) -x_j f(x_j)}{x_0(x_0-x_j)^2}
- \frac{1}{x_0(x_0-x_j)} \frac{d}{d x_j}
\big(x_jf(x_j) \big).
\ee

Now we examine
\ben
III & = & \sum_{k=1}^{m-2} \corr{p_{2k}p_{2m-2-2k} \cdot p_{a_1} \cdots p_{a_n}}^c_{g-1}(t)
\frac{1}{x_0^{m+1}}\prod_{i=1}^n \frac{1}{x_i^{a_i+1}}.
\een
This leads us to the operator:
\be
\frac{1}{u^{k+1}} \frac{1}{v^{l+1}} \mapsto
\frac{1}{x_0^{k+l+2}}
\ee
This operator can be realized by taking the limit:
\be
E_{x_0, u,v} f(u,v) =   \lim_{u \to v} f(u,v)|_{v = x_0}.
\ee

We combine $II$ with the terms in $IV$ with $g_1=0 $ and $|I_1|=0$,
or $g_2=0$ and $|I_2|=0$.
These together give us $4 W_{0,1}(x_0) G_{0,n+1}(x_0,x_1, \dots, x_n)$.
The rest of the terms in $IV$ give us
$$\sum_{\substack{g_1+g_2=g\\I_1 \coprod I_2=[n]}}\;' E_{x_0, u,v}
\biggl( G_{g_1}(u, x_{I_1}) \cdot G_{g_2}(v, x_{I_2}) \biggr).$$

To summarize,
we obtain the following identity:
\ben
&&  G_g(x_0, x_1, \dots, x_n) \\
& = & 2\sum_{j=1}^n D_{x_0,x_j} G_g(x_1, \dots, x_n)
 + 2E_{x_0, u,v} G_{g-1}(u,v, x_1, \dots, x_n) \\
& + & 2\sum_{\substack{g_1+g_2=g\\I_1 \coprod I_2=[n]}}\;' E_{x_0, u,v}
\biggl( G_{g_1}(u, x_{I_1}) \cdot G_{g_2}(v, x_{I_2}) \biggr) \\
& + & 4 W_{0,1}(x_0) G_{0,n+1}(x_0,x_1, \dots, x_n)
- \frac{1}{8} \delta_{g,1} \delta_{n,0} \frac{1}{x_0^2}.
\een
From this we derive the following:

\begin{prop}
Define the renormalized operators $\tilde{D}$ and $\tilde{E}$ as follows:
\begin{align}
\tilde{D}_{x_0,x_j} & = \frac{2}{1-4 W_{0,1}(x_0)} D_{x_0,x_j}, &
\tilde{E}_{x_0, u,v} & = \frac{2}{1-4 W_{0,1}(x_0)} E_{x_0, u,v}.
\end{align}
Then one has:
\be \label{eqn:G-g-n}
\begin{split}
&  G_{g,n+1}(x_0, x_1, \dots, x_n) \\
= & \sum_{j=1}^n \tilde{D}_{x_0,x_j}
G_{g,n}(x_1, \dots, x_n)
 + \tilde{E}_{x_0, u,v} G_{g-1,n+2}(u,v, x_1, \dots, x_n) \\
+ & \sum_{\substack{g_1+g_2=g\\I_1 \coprod I_2=[n]}}\;'
\tilde{E}_{x_0, u,v}
\biggl( G_{g_1,|I_1|+1}(u, x_{I_1}) \cdot
G_{g_2,|I_2|+1}(v, x_{I_2}) \biggr) \\
- & \frac{\delta_{g,1} \delta_{n,0}}{8(1-4 W_{0,1}(x_0))}  \frac{1}{x_0^2}.
\end{split}
\ee
\end{prop}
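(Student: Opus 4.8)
The plan is to read the proposition straight off the unnormalized recursion identity that has just been assembled from the five lines $I$, $II$, $III$, $IV$, $V$, together with one algebraic rearrangement. The decisive structural observation is that the term $4W_{0,1}(x_0)\,G_{g,n+1}(x_0,x_1,\dots,x_n)$ on the right-hand side of that identity involves the \emph{same} $(g,n+1)$-point function that stands on the left-hand side, so the identity is implicit in $G_{g,n+1}$. This is exactly the arbitrary-genus analogue of the mechanism that produced the factor $1/\sqrt{1-4t/x_0}$ in the genus-zero two-point computation: the term $II$ (carrying genus $g$) together with the two ``boundary'' pieces of $IV$ in which a genus-zero one-point factor $\corr{p_{2k}}_0^c$ splits off recombine into $4W_{0,1}(x_0)\,G_{g,n+1}(x_0,\dots,x_n)$, using $W_{0,1}(x_0)=\tfrac{t}{2x_0}+G_{0,1}(x_0)$.

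Given this, I would first collect the two occurrences of $G_{g,n+1}(x_0,\dots,x_n)$ onto the left, obtaining
\[
(1-4W_{0,1}(x_0))\,G_{g,n+1}(x_0,x_1,\dots,x_n)=2\sum_{j=1}^n D_{x_0,x_j}G_{g,n}+2E_{x_0,u,v}G_{g-1,n+2}+2{\sum}' E_{x_0,u,v}\bigl(G_{g_1}G_{g_2}\bigr)-\tfrac{1}{8}\delta_{g,1}\delta_{n,0}\,x_0^{-2}.
\]
I would then divide through by the scalar $1-4W_{0,1}(x_0)$. By the closed form of $W_{0,1}$ one has $1-4W_{0,1}(x_0)=\sqrt{1-4t/x_0}$, which is an invertible formal power series in $t/x_0$ (its constant term is $1$), so the division is legitimate at the level of formal generating series and the result is again a well-defined $n$-point function.

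Finally, since the prefactor $\tfrac{1}{1-4W_{0,1}(x_0)}$ depends only on $x_0$, I would distribute it across the right-hand side and merge it with the explicit factors of $2$: the combination $\tfrac{2}{1-4W_{0,1}(x_0)}D_{x_0,x_j}$ is by definition the renormalized operator $\tilde D_{x_0,x_j}$, and likewise $\tfrac{2}{1-4W_{0,1}(x_0)}E_{x_0,u,v}=\tilde E_{x_0,u,v}$, while the Kronecker term picks up the same scalar to give $-\tfrac{\delta_{g,1}\delta_{n,0}}{8(1-4W_{0,1}(x_0))}x_0^{-2}$. This reproduces \eqref{eqn:G-g-n} verbatim. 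As for the main obstacle: the proposition itself carries no analytic difficulty, being a pure rearrangement; all the genuine work lies upstream in the bookkeeping that produced the unnormalized identity from \eqref{eqn:P2m} (identifying the operators $D$ and $E$, and the delicate step of recognizing that $II$ plus the genus-zero-one-point ``leakage'' pieces of $IV$ reassemble into the self-referential term $4W_{0,1}(x_0)G_{g,n+1}$). The one point deserving a word of care in the final step is the invertibility of $1-4W_{0,1}(x_0)$, which is guaranteed by its closed form $\sqrt{1-4t/x_0}$.
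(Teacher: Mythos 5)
Your proposal is correct and follows exactly the route the paper takes: the displayed identity preceding the proposition already contains the self-referential term $4W_{0,1}(x_0)\,G_{g,n+1}(x_0,\dots,x_n)$ (the paper's subscript $G_{0,n+1}$ there is a typo for $G_{g,n+1}$, as you correctly read it), and the proposition is obtained by moving that term to the left and dividing by $1-4W_{0,1}(x_0)=\sqrt{1-4t/x_0}$, which absorbs the factors of $2$ into the renormalized operators $\tilde D$ and $\tilde E$. Your added remark on the invertibility of $1-4W_{0,1}(x_0)$ as a formal power series is a sensible point of care that the paper leaves implicit.
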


\subsection{Examples}

We  now present some sample computations  of $G_{g,n}$ using \eqref{eqn:G-g-n}.

\subsubsection{Three-point function in genus zero}

\ben
&&  G_0(x_0, x_1, x_2) \\
& = & \sum_{j=1}^2 \tilde{D}_{x_0,x_j} G_0(x_1, x_2)
+ 2 \tilde{E}_{x_0, u,v}
\biggl( G_0(u, x_1) \cdot G_0(v, x_2) \biggr) \\
& = & \sum_{j=1}^2 \tilde{D}_{x_0,x_j}
\biggl( \frac{1-\frac{2t}{x_1}-\frac{2t}{x_2}}
{2(x_1-x_2)^2\sqrt{(1-\frac{4t}{x_1})(1-\frac{4t}{x_2})}}
- \frac{1}{2(x_1-x_2)^2}   \biggr) \\
& + & 2\tilde{E}_{x_0, u,v} \biggl(
\biggl(\frac{1-\frac{2t}{u}-\frac{2t}{x_1}}
{2(u-x_1)^2\sqrt{(1-\frac{4t}{u})(1-\frac{4t}{x_1})}}
- \frac{1}{2(u-x_1)^2}\bigg) \\
&& \cdot\biggl( \frac{1-\frac{2t}{v}-\frac{2t}{x_2}}
{2(v-x_2)^2\sqrt{(1-\frac{4t}{v})(1-\frac{4t}{x_2})}}
- \frac{1}{2(v-x_2)^2}\biggr) \biggr).
\een
After a complicated computation with the help of Maple,
the following simple formula is obtained:
\be \label{eqn:G-03}
G_0(x_0,x_1,x_2)
= \frac{4t^2}
{x_0^2x_1^2x_2^2((1-4t/x_0)(1-4t/x_1)(1-4t/x_2))^{3/2}}.
\ee
After expanding this in Taylor series in $t$:
\ben
G_0(x_0,x_1,x_2) & = & \frac{4t^2}{x_0^2x_1^2x_2^2}
\prod_{j=0}^2 \sum_{m_j=0}^\infty \frac{(2m_j+3)!}{m_j!} \frac{(2t)^{m_j}}{x_j^{m_j}},
\een
we get:
\be
\corr{p_{2n_1} p_{2n_2} p_{2n_3}}_0^c
= (2t)^{\sum_{j=1}^3 n_j-1} \prod_{j=1}^3 \frac{(2n_j+1)!!}{n_j!}.
\ee

\subsubsection{Four-point function in genus zero}
In this case \eqref{eqn:G-g-n} takes the following form:
\ben
&&  G_0(x_0, x_1, x_2, x_3)
= \sum_{j=1}^3 \tilde{D}_{x_0,x_j} G_0(x_1, x_2, x_3)  \\
& + &  2 \tilde{E}_{x_0, u,v}
\biggl( G_{0}(u, x_1) \cdot G_{0}(v, x_2, x_3) \\
& + &  G_{0}(u, x_2) \cdot G_{0}(v, x_1, x_3)
+ G_{0}(u, x_3) \cdot G_{0}(v, x_1, x_2) \biggr).
\een
A calculation shows that:
\be \label{eqn:G04}
G_{0,4}(x_0, \dots, x_3)
= 24t^2 \frac{e_4-2e_3t+32e_1t^3-256t^4}
{\prod\limits_{j=0}^3 x_j^3 (1- \frac{4t}{x_j})^{5/2}},
\ee
where $e_j$ denotes the $j$-th elementary symmetric polynomial in $x_0, \dots, x_3$.

\ben
G_{0,4}(x_0, \dots, x_3)
& = & \frac{24t^2}{\prod\limits_{j=0}^3 x_j^3} (e_4-2e_3t+32e_1t^3-256t^4)
\prod\limits_{j=0}^3 \sum_{m_j=0}^\infty
\frac{(2m_j+3)!!}{3\cdot m_j!} \frac{(2t)^{m_j}}{x_j^{m_j}} \\
& = & 24t^2
\prod\limits_{j=0}^3 \sum_{m_j=0}^\infty
\frac{(2m_j+3)!!}{3\cdot m_j!} \frac{(2t)^{m_j}}{x_j^{m_j+2}} \\
& - &  48t^3 \sum_{k=0}^3
\prod\limits_{j=0}^3 \sum_{m_j=0}^\infty
\frac{(2m_j+3)!!}{3\cdot m_j!} \frac{(2t)^{m_j}}{x_j^{m_j+2+\delta_{j,k}}} \\
& + & 768t^5 \sum_{k=0}^3
\prod\limits_{j=0}^3 \sum_{m_j=0}^\infty
\frac{(2m_j+3)!!}{3\cdot m_j!} \frac{(2t)^{m_j}}{x_j^{m_j+3-\delta_{j,k}}} \\
& - & 6144t^6
\prod\limits_{j=0}^3 \sum_{m_j=0}^\infty
\frac{(2m_j+3)!!}{3\cdot m_j!} \frac{(2t)^{m_j}}{x_j^{m_j+3}} \\
& = & 24t^2
\prod\limits_{j=0}^3 \sum_{m_j=0}^\infty
\frac{(2m_j +3)!!}{3\cdot m_j!} \frac{(2t)^{m_j}}{x_j^{m_j+2}} \\
& - &  24t^2 \sum_{k=0}^3
\prod\limits_{j=0}^3 \sum_{m_j=0}^\infty
\frac{(2m_j-2\delta_{j,k}+3)!!}{3\cdot (m_j-\delta_{j,k})!} \frac{(2t)^{m_j}}{x_j^{m_j+2}} \\
& + & 96t^2 \sum_{k=0}^3
\prod\limits_{j=0}^3 \sum_{m_j=0}^\infty
\frac{(2m_j+2\delta_{j,k}+1)!!}{3\cdot (m_j-1+\delta_{j,k})!} \frac{(2t)^{m_j}}{x_j^{m_j+2}} \\
& - & 384t^2
\prod\limits_{j=0}^3 \sum_{m_j=0}^\infty
\frac{(2m_j+1)!!}{3\cdot (m_j-1)!} \frac{(2t)^{m_j}}{x_j^{m_j+2}}.
\een
After simplification we get
\ben
G_{0,4}(x_0, \dots, x_3)
= 8t^2
 \sum_{m_0, \dots, m_3=0}^\infty (\sum_{j=0}^3 m_j+3) \prod\limits_{j=0}^3
\frac{(2m_j+1)!!}{m_j!} \frac{(2t)^{m_j}}{x_j^{m_j}}.
\een
In terms of correlators,
\be
\corr{p_{2n_1} \cdots p_{2n_4}}_0^c
= 2^{\sum_{j=1}^4 n_j -1} t^{\sum_{j=1}^4 n_j -2} (n_1+ \cdots + n_4 -1)
\cdot \prod_{j=1} \frac{(2n_j-1)!!}{(n_j-1)!}.
\ee

\subsubsection{One-point function in genus one}

In this case \eqref{eqn:G-g-n} takes the form:
\ben
G_{1,1}(x_0)
& = & -\frac{1}{8x_0^2(1- 4t/x_0)^{1/2}} + \tilde{E}_{x_0, u,v} G_{0,2}(u,v).
\een
It is easy to see that
\ben
E_{x_0,x,y} W_0^{(2)}(x,y)
& = & \frac{2}{\sqrt{1-\frac{4t}{x_0}}}
\lim_{u \to v} \biggl( \frac{1-\frac{2t}{u}-\frac{2t}{v}}
{2(u-v)^2\sqrt{(1-\frac{4t}{u})(1-\frac{4t}{v})}}
- \frac{1}{2(u-v)^2}\biggr)\biggr|_{v \to x_0} \\
& = & \frac{2t^2}{x_0^4(1- 4t/x_0)^{5/2} }.
\een
Therefore,
\be
G_{1,1}(x_0) = -\frac{1}{8x_0^2(1- 4t/x_0)^{1/2}}
+ \frac{2t^2}{x_0^4(1- 4t/x_0)^{5/2} }.
\ee
By expanding into Taylor series in $t$,
\ben
G_{1,1}(x_0)
& = & - \frac{1}{8x_0^2}
\sum_{m=0}^\infty \frac{(2m-1)!!}{m!} \frac{(2t)^m}{x_0^m}
+ \frac{2t^2}{3x_0^4}
\sum_{n=0}^\infty \frac{(2n+3)!!}{n!} \frac{(2t)^n}{x_0^n} \\
& = & - \frac{1}{8x_0^2}
\sum_{m=0}^\infty \frac{(2m-1)!!}{m!} \frac{(2t)^m}{x_0^m}
+ \frac{1}{6x_0^2}
\sum_{m=0}^\infty m(m-1)\frac{(2m-1)!!}{m!} \frac{(2t)^m}{x_0^m},
\een
we get
\ben
\corr{p_{2n}}_1 & = & - \frac{1}{8} \frac{(2n-3)!!}{(n-1)!}(2t)^{n-1}
+ \frac{1}{6}\frac{(2n-3)!!}{(n-3)!} (2t)^{n-1} \\
& = & \frac{(2n-5)\cdot (2n-1)!!}{24 \cdot (n-1)!} (2t)^{n-1}.
\een
A crucial observation which will play an important role below is that one can rewrite $G_{1,1}(x_0)$ as follows:
\be \label{eqn:G11}
G_{1,1}(x_0) = -\frac{1}{8x_0^2(1- 4t/x_0)^{3/2}}
+ \frac{t}{2x_0^3(1- 4t/x_0)^{5/2} }.
\ee

\subsection{General structure of $G_{g,n}(x_1, \dots, x_n)$}

By induction one easily sees that

\begin{prop} \label{prop:Structure}
When $2g-2+n > 0$,
$G_{g,n}(p_1, \dots, p_n)$ have the following form:
\be
\sum_{\substack{a_1, \dots, a_n \geq 2\\b_1, \dots, b_n \in \bZ}}
A^{g,n}_{a_1, \dots, a_n; b_i \geq a_i -1, 1\leq i \leq n}(t)
x_1^{-a_1} \cdots x_n^{-a_n} y_1^{2b_1+1} \cdots y_n^{2b_n+1},
\ee
where $y_i$ is defined by
\be
y_i = - \frac{1}{4} \sqrt{1- \frac{4t}{x_i}}.
\ee
In particular,
they only have poles of odd orders at $y_j =0$.
\end{prop}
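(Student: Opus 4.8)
The plan is to induct on the Euler characteristic $2g-2+n$, using the recursion \eqref{eqn:G-g-n} as the inductive engine. The two cases of minimal complexity with $2g-2+n>0$, namely $(g,n)=(0,3)$ and $(g,n)=(1,1)$, serve as the base of the induction, since in the recursion they are produced from the exceptional two-point function $G_{0,2}$ and so cannot themselves be obtained from the inductive hypothesis. For these the closed forms \eqref{eqn:G-03} and \eqref{eqn:G11} have already been computed, and one checks directly that they have the asserted shape: rewriting $\sqrt{1-4t/x_i}=-4y_i$ turns every factor $(1-4t/x_i)^{\pm(2c+1)/2}$ into an odd power $(-4y_i)^{\pm(2c+1)}$, while the prefactors $x_i^{-a_i}$ are even in $y_i$, so each summand is odd in every $y_i$, its only $y_i$-singularity sits at $y_i=0$ and is of odd order, and the exponents obey the stated constraints. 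It is precisely the rewriting \eqref{eqn:G11} of $G_{1,1}$ into half-integer powers of $1-4t/x_0$ that makes this base case transparent.

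For the inductive step I would first record the identity that drives everything. Since $W_{0,1}(x_0)=\tfrac14\bigl(1-\sqrt{1-4t/x_0}\bigr)$, one has $1-4W_{0,1}(x_0)=\sqrt{1-4t/x_0}=-4y_0$, so the renormalization prefactor appearing in both $\tilde D$ and $\tilde E$ is $\frac{2}{1-4W_{0,1}(x_0)}=-\frac{1}{2y_0}$. This single factor supplies the crucial odd power $y_0^{-1}$ in the new variable $x_0$. Writing the recursion as $G_{g,n+1}(x_0,\dots)=-\frac{1}{2y_0}\,\mathcal{B}(x_0)$, where $\mathcal{B}$ is the bracket assembled from $D_{x_0,x_j}G_{g,n}$, from $E_{x_0,u,v}G_{g-1,n+2}$, from the stable products, and from the $\delta_{g,1}\delta_{n,0}$ term, the whole problem reduces to showing that $\mathcal{B}$ is \emph{even} in $y_0$ and holomorphic in $x_0$ away from $y_0=0$.

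Two of the three families of terms behave as desired almost by inspection. The operator $E_{x_0,u,v}$ restricts to the diagonal $u=v=x_0$; applied to $G_{g-1,n+2}(u,v,\dots)$ and to a stable product $G_{g_1}(u,\dots)G_{g_2}(v,\dots)$, each factor of which is odd in its own square-root variable by the inductive hypothesis, the flip $y_0\mapsto-y_0$ reverses both $y_u$ and $y_v$ and hence fixes the diagonal value: these contributions are even in $y_0$. The $\delta$-term is manifestly of the right form. The delicate point is the $D$-operator: its local part (the derivative term and the $x_jf(x_j)$ term) is rational in $x_0$, regular at $y_0=0$ and even in $y_0$, hence harmless, but its nonlocal part $\frac{G_{g,n}(\dots,x_0,\dots)}{(x_0-x_j)^2}$ is \emph{odd} in $y_0$ and would by itself spoil the parity of $\mathcal{B}$. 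The main obstacle is to show that these odd nonlocal contributions cancel. They must cancel against the odd-in-$y_0$ terms produced when the even-in-$y$ piece of an unstable $G_{0,2}$ factor in a stable product meets an odd factor on the diagonal, that is, against the $-\frac{1}{2(x_1-x_2)^2}$ summand of \eqref{eqn:G02}, which plays here the role of the regular ($B^0$-subtracted) part of the Bergman kernel in Eynard-Orantin theory.

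Granting this cancellation, $\mathcal{B}$ is even in $y_0$ with its only $x_0$-singularity at $y_0=0$ — one also checks that the apparent poles of the $D$-operator at the diagonal $x_0=x_j$ cancel between its local and nonlocal parts, and that no pole survives at $x_0=0$ after multiplication by $y_0^{-1}$ — so $G_{g,n+1}=-\frac{1}{2y_0}\mathcal{B}$ is odd in $y_0$ with odd-order poles confined to the branch point. Preservation of the structure in the spectator variables $x_1,\dots,x_n$ is immediate, since $D$ and $E$ act in those slots by multiplication and differentiation against rational functions, operations that respect oddness in each $y_i$. Finally one tracks the ranges of the exponents $a_i$ and $b_i$ through $D_{x_0,x_j}$ and $E_{x_0,u,v}$ to confirm the stated bounds; this is routine bookkeeping once the parity statement is in hand. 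I expect the parity cancellation of the previous paragraph to be the crux, as it is the point where the purely algebraic manipulation of the Virasoro constraints reproduces the residue structure underlying the Eynard-Orantin recursion.
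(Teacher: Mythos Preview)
Your plan coincides with the paper's: induction on $2g-2+n$ via \eqref{eqn:G-g-n}, with the crucial grouping of the $\tilde D_{x_0,x_j}$ term together with the unstable $G_{0,2}$ contributions from the $\sum'$ so that the combination is even in $y_0$ before the prefactor $-\frac{1}{2y_0}$ is applied. Your parity framing is a clean way to see \emph{why} this grouping is forced.

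Where you diverge is in what you label ``routine bookkeeping.'' The paper does not argue by parity but instead computes the combination
\[
D_{x_0,x_j}\bigl(x_j^{-a}y_j^{-2b-1}\bigr)\;+\;2\,E_{x_0,u,v}\bigl(G_{0,2}(x_j,u)\,v^{-a}y(v)^{-2b-1}\bigr)
\]
explicitly on a monomial, puts everything over a common denominator, and then manipulates the numerator through several rewritings until it is visibly a polynomial in $x_0,x_j,t$ divided by $x_0^{b+2}x_j^{b+2}y_0^{2b+2}y_j^{2b+3}$. This single computation does three jobs at once: it shows the cancellation of the odd-in-$y_0$ piece you identified, it shows that the apparent double pole at $x_0=x_j$ disappears, and it reads off the exponent bounds $a_0\ge 2$, $b_0\ge a_0-1$ directly. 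In your outline the first of these is the ``crux'' you flag, but the second and third are not automatic consequences of parity---the diagonal-pole cancellation in particular needs the specific form of $G_{0,2}$ beyond its $-\tfrac{1}{2(x_0-x_j)^2}$ summand, and the exponent inequality comes from the detailed degree count in the numerator. So the explicit computation is the substance of the argument rather than a cleanup step; your plan is correct, but be prepared for that step to carry most of the weight.
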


\begin{proof}
We have verified the cases of $G_{0,3}$, $G_{1,1}$ and
$G_{0,4}$.
We now use \eqref{eqn:G-g-n} to inductively finish the proof.
We first consider the terms
$\tilde{E}_{x_0, u,v} G_{g-1,n+2}(u,v, x_1, \dots, x_n)$.
They involve
\ben
&& \tilde{E}_{x_0, u,v} (u^{-a_1}y(u)^{-2b_1-1}
\cdot v^{-a_2}y(v)^{-2b_2-1})
= -\half x_0^{-a_1-a_2} y(x_0)^{-2b_1-2b_2-3},
\een
where we have used the fact that
\ben
&& \frac{2}{1-4W_{0,2}(x_0)} = -\frac{1}{2y_0}.
\een
Because $b_1 \geq a_1-1$, $b_2 \geq a_2 -1$,
we have
\ben
&& b_1 + b_2 +1
\geq (a_1-1) + (a_2-1) + 1 = a_1+a_2 - 1.
\een
Similarly,
when $(g_1, |I_1|+1) \neq (0,2)$ and
$(g_2, |I_2|+1) \neq (0,2)$,
$\tilde{E}_{x_0, u,v}
\biggl( G_{g_1,|I_1|+1}(u, x_{I_1}) \cdot
G_{g_2,|I_2|+1}(v, x_{I_2}) \biggr)$
can be treated in the same way.
The rest of the terms are of the following form:
$$
\sum_{j=1}^n \biggl[\tilde{D}_{x_0,x_j}
G_{g,n}(x_1, \dots, x_n)
+ 2\tilde{E}_{x_0, u,v}
\biggl( G_{0,2}(u, x_j) \cdot
G_{g,n}(v, x_1, \dots, \hat{x}_j, \dots, x_n) \biggr)
\biggr],
$$
and so they involve:
\ben
&& D_{x_0, x_j}(x_j^{-a} y_j^{-2b-1}) + 2 \cE_{x_0, u,v} (G_{0,2}(x_j, u) v^{-a} y(v)^{-2b-1}) \\
& = & \frac{x_0 \cdot x_0^{-a}y_0^{-2b-1}
- x_j \cdot x_j^{-a}y_j^{-2b-1}}{x_0(x_0-x_j)^2}
- \frac{1}{x_0(x_0-x_j)}\cdot
\frac{d}{dx_j} \biggl( x_j \cdot x_j^{-a} y_j^{-2b-1}\biggr) \\
& + & 2\biggl(
\frac{1-\frac{2t}{x_0} - \frac{2t}{x_j}}{2(x_0-x_j)^2 (1- \frac{4t}{x_0})^{1/2}(1-\frac{4t}{x_j})^{1/2}}
- \frac{1}{2(x_0-x_j)^2} \biggr) \cdot x_0^{-a} y_0^{-2b-1} \\
& = & \frac{x_0 \cdot x_0^{-a}y_0^{-2b-1}
- x_j \cdot x_j^{-a}y_j^{-2b-1}}{x_0(x_0-x_j)^2} \\
& - & \frac{1}{x_0(x_0-x_j)}\cdot
\biggl( (-a+1)   x_j^{-a} y_j^{-2b-1}
- (2b+1) x_j^{-a+1} y_j^{-2b-2} \cdot \frac{t}{8x_j^2y_j} \biggr) \\
& + &
\frac{(1-\frac{2t}{x_0} - \frac{2t}{x_j})x_0^{-a} y_0^{-2b-1}}{16(x_0-x_j)^2 y_0y_j}
- \frac{x_0^{-a} y_0^{-2b-1}}{(x_0-x_j)^2}    \\
& = & \frac{1}{x_0(x_0-x_j)^2}
\biggl[  -  x_j^{-a+1}y_j^{-2b-1}  \\
& - &  (x_0-x_j) \cdot
\biggl( (-a+1)   x_j^{-a} y_j^{-2b-1}
- (2b+1) x_j^{-a-1} y_j^{-2b-3} \cdot \frac{t}{8} \biggr) \\
& + &
\frac{1}{16} (1-\frac{2t}{x_0} - \frac{2t}{x_j})x_0^{-a+1} y_0^{-2b-2}y_j^{-1} \biggr].
\een
We rewrite the right-hand side as follows:
\ben
& & \frac{x_j^{-a-1} y_j^{-2b-3}x_0^{-a-1} y_0^{-2b-2}}{x_0(x_0-x_j)^2}
\biggl[  -  x_j^2  y_j^2  x_0^{a+1} y_0^{2b+2} \\
& - &  (x_0-x_j) \cdot
\biggl( (-a+1)   x_j y_j^{2}
- (2b+1) \cdot \frac{t}{8} \biggr) x_0^{a+1} y_0^{2b+2}\\
& + &
\frac{1}{16} (1-\frac{2t}{x_0} - \frac{2t}{x_j})x_0^{2}   x_j^{a+1} y_j^{2b+2} \biggr] \\
& = & \frac{x_j^{-a-1} y_j^{-2b-3}x_0^{-a-1} y_0^{-2b-2}}{2^{8b+12} x_0(x_0-x_j)^2}
\biggl[  -  x_j^2 \cdot  (1-\frac{4t}{x_j}) x_0^{a+1} (1-\frac{4t}{x_0})^{b+1} \\
& - &  (x_0-x_j) \cdot
\biggl( (-a+1)   x_j \cdot  (1-\frac{4t}{x_j})
- (2b+1)    \cdot 2t  \biggr) x_0^{a+1} (1-\frac{4t}{x_0})^{b+1}\\
& + & (1-\frac{2t}{x_0} - \frac{2t}{x_j})x_0^{2}   x_j^{a+1} (1-\frac{4t}{x_j})^{b+1} \biggr] \\
& = & \frac{x_j^{-a-1} y_j^{-2b-3}x_0^{-a-1} y_0^{-2b-2}}{2^{8b+12} x_0(x_0-x_j)^2}
\biggl[  -  x_j \cdot  (x_j-4t) x_0^{a-b} (x_0-4t)^{b+1} \\
& - &  (x_0-x_j) \cdot
\biggl( (-a+1)(x_j-4t)
- (2b+1)    \cdot 2t  \biggr) x_0^{a-b} (x_0-4t)^{b+1}\\
& + & (x_0x_j-2tx_0 - 2tx_j)x_0   x_j^{a-b-1} (x_j-4t)^{b+1} \biggr] \\
& = & \frac{  y_j^{-2b-3}  y_0^{-2b-2}}{2^{8b+12} x_0^{b+2}x_j^{b+2}(x_0-x_j)^2}
\biggl[  -  x_j^{b-a+2} \cdot  (x_j-4t)  (x_0-4t)^{b+1} \\
& - &  (x_0-x_j) \cdot
\biggl( (-a+1)(x_j-4t)
- (2b+1)    \cdot 2t  \biggr) x_j^{b-a+1} (x_0-4t)^{b+1}\\
& + & (x_0x_j-2tx_0 - 2tx_j)x_0^{b-a+1}     (x_j-4t)^{b+1} \biggr] \\
& = & \frac{  y_j^{-2b-3}  y_0^{-2b-2}}{2^{8b+12} x_0^{b+2}x_j^{b+2}}
\sum_{\substack{p+q+r=2b-a+2\\0 \leq p, q \leq b} } \alpha_{p,q,r} x_0^p x_j^q t^r.
\een
By this we complete the proof.
\end{proof}

\section{Emergent Geometry of Modified Hermitian One-Matrix Model}

In last Section we have defined the $n$-point function $G_{g,n}$ and derive an algorithm to compute
them by the operators $\tilde{D}$ and $\tilde{E}$.
This algorithm is based on the Virasoro constraints.
Inspired by Eynard-Orantin \cite{EO},
we will reformulate the $n$-point functions as  multilinear diffrentials on a plane algebraic curve,
satisfying the Eynard-Orantin topological recursions in the next Section.
In this Section we show how spectral curve and its geometry naturally appear
from the point of view of emergent geometry.

\subsection{Emergence of the spectral curve}

Recall the Virasor constraints in genus zero are
the following sequence of differential equations:
\bea
&& \half \frac{\pd F_0}{\pd s_2} =  \sum_{k\geq 1}
k s_{2k} \frac{\pd F_0}{\pd s_{2k}} + \frac{t^2}{4},  \label{eqn:L0-0} \\
&& \half \frac{\pd F_0}{\pd s_{2n+2}} =
\sum^{n-1}_{k=1} \frac{\pd F_0}{\pd s_{2k}}
\frac{\pd F_0}{\pd s_{2n-2k}}
+ t \frac{\pd F_0}{\pd s_{2n}} +  \sum_{k\geq 1}
ks_{2k} \frac{\pd F_0}{\pd s_{2k+2n}}, \quad n \geq 1.
\eea
In earlier work the author has developed
the theory of emergent geometry of spectral curves,
associated with Virasoro constraints,
see \cite{Zhou1,Zhou2, Zhou3}.
The starting point is to consider a suitable generating series of the first derivatives of $F_0$
in all coupling constants.
In this case,
we consider:
\be \label{eqn:Special}
y:=\half \sum_{k=1}^\infty k(s_{2k}-\half \delta_{k,1}) x^{k-1}
+ \frac{t}{2x}  + \sum_{k=1}^\infty \frac{1}{x^{k+1}}
\frac{\pd F_0}{\pd s_{2k}}.
\ee
Then we have:
\ben
y^2 & = & \frac{1}{4}
(\sum_{k=1}^\infty k(s_{2k}-\half \delta_{k,1}) x^{k-1})^2
+ \frac{t^2}{4x^2} +
\biggl( \sum_{k=1}^\infty \frac{1}{x^{k+1}}
\frac{\pd F_0}{\pd s_{2k}} \biggr)^2 \\
& + & \frac{t}{2} \sum_{k=1}^\infty k(s_{2k}-\half \delta_{k,1}) x^{k-2}
+ t\sum_{k=1}^\infty \frac{1}{x^{k+2}}
\frac{\pd F_0}{\pd s_{2k}}
+ \sum_{k, l \geq 1}k (s_{2k}- \half\delta_{k,1})
\frac{\pd F_0}{\pd s_{2l}} x^{k-l-2}.
\een
Therefore,
\ben
(y^2)_- & = &
\biggl( \half t(s_{2}-\frac{1}{2})
+ \sum_{l \geq 1}(l+1) s_{2l+2}
\frac{\pd F_0}{\pd s_{2l}} \biggr) x^{-1} \\
& + & \biggl( \sum_{k \geq 1}k (s_{2k}- \half \delta_{k,1})
\frac{\pd F_0}{\pd s_{2k}} + \frac{t^2}{4} \biggr) x^{-2} \\
& + & \sum_{n=1}^\infty \frac{1}{x^{n+2}}
\biggl(\sum_{k=1}^{n-1}\frac{\pd F_0}{\pd s_{2k}}
\frac{\pd F_0}{\pd s_{2n-2k}}
+t\frac{\pd F_0}{\pd s_{2n}}
+ \sum_{k \geq 1}k (s_{2k}- \half \delta_{k,1})
\frac{\pd F_0}{\pd s_{2n+2k}} \biggr).
\een
So by the Virasoro constraints above:
\be \label{eqn:y2minus}
(y^2)_- = \biggl( \half t(s_{2}-\half)
+ \sum_{l \geq 1}(l+1) s_{2l+2}
\frac{\pd F_0}{\pd s_{2l}} \biggr) x^{-1},
\ee
and so
\ben
y^2 & = & \frac{1}{4}
(\sum_{k=1}^\infty k(s_{2k}-\half \delta_{k,1}) x^{k-1})^2
 \\
& + & \frac{t}{2} \sum_{k=1}^\infty k(s_{2k}-\half \delta_{k,1}) x^{k-2}
+ \sum_{l \geq 1} \sum_{k \geq l+1} k s_{2k}
\frac{\pd F_0}{\pd s_{2l}} x^{k-l-2}.
\een
It follows that when $s_{2k} = 0$,
\be \label{eqn:Spectral}
y^2=\frac{1}{16}-\frac{t}{4x}.
\ee
This defines an algebraic curve on $\bC^2$.
We refer to this curve as the {\em spectral curve} of the modified Hermitian one-matrix model.
We say that \eqref{eqn:Special} defines a {\em special deformation} of the spectral curve.
From \eqref{eqn:Spectral} we get:
\be
x = \frac{4t}{1-16y^2}.
\ee
We call the right-hand side the LG superpotential function of the modified Hermitian one-matrix model.

\subsection{Uniqueness of special deformation}

For a formal power series $a(x) = \sum_{n\in \bZ} a_n z^n$,
let
\be
a_{< -1} = \sum_{n < -1} a_n z^n.
\ee
The following result is very easy to prove:

\begin{thm}
For a series of the form
\be
y =  \frac{1}{2} \frac{\pd S(x; \bs)}{\pd x}
+ \frac{t}{2x}+  \sum_{n \geq 0} w_n x^{-n-2},
\ee
where $S$ is the universal action defined by:
\be \label{eqn:Action2}
S(x;\bs) = - \half x  + \sum_{n \geq 1} s_{2n} x^n,
\ee
and each $w_n \in \bC[[s_2, s_4, \dots]]$,
the equation
\be
(y^2)_{<-1} = 0
\ee
 has a unique solution given by:
\ben
&& y = \frac{1}{2} \sum_{n=0}^\infty
(s_{2n} -\half \delta_{n,1}) x^n
+ \frac{t}{2x} +\sum_{n = 1}^\infty
\frac{1}{x^{n+1}} \frac{\pd F_{0}(t)}{\pd s_{2n}}.
\een
\end{thm}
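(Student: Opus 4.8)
The plan is to split the statement into an existence claim and a uniqueness claim. Existence is essentially already in hand: the series \eqref{eqn:Special}, whose positive part is $\half \pd S/\pd x$ and whose negative tail has coefficients $w_n=\pd F_0/\pd s_{2n+2}$, was shown in \eqref{eqn:y2minus} to have $(y^2)_-$ equal to a single $x^{-1}$ term once the genus-zero Virasoro constraints \eqref{eqn:L0-0} are imposed. Hence $(y^2)_{<-1}=0$, so this $y$ is a solution of the prescribed form. What remains is to show it is the \emph{only} one inside $\bC[[s_2,s_4,\dots]]$. Note that the ansatz itself already fixes the positive part to be $\half\pd S/\pd x$ and the coefficient of $x^{-1}$ to be $t/2$; the only genuine unknowns are the $w_n$, so uniqueness is a statement about these.

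For the uniqueness argument I would write $P(x)=\half\pd S/\pd x=p_0+\sum_{j\geq1}p_jx^j$ with $p_0=\half s_2-\tfrac14$ and $p_j=\half(j+1)s_{2j+2}$, set $W=\sum_{n\geq0}w_nx^{-n-2}$, and expand $y^2=(P+\tfrac{t}{2x}+W)^2$. Reading off the coefficient of $x^{-n-2}$ (these are exactly the powers constrained by $(y^2)_{<-1}=0$), the terms $P^2$ and $2P\cdot\tfrac{t}{2x}$ contribute nothing, and the vanishing condition becomes
\be
\frac{t^2}{4}\delta_{n,0}+\sum_{m+k=n-2}w_mw_k+2\sum_{m\geq n}p_{m-n}w_m+t\,w_{n-1}=0,\qquad n\geq0.
\ee
The decisive feature is that the $m=n$ piece of the middle sum is $2p_0w_n=(s_2-\half)w_n$, whose constant-in-$\bs$ part $-\half w_n$ carries the unknown $w_n$ with an invertible coefficient.

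The main obstacle is that the cross term $2PW$ produces not just this $-\half w_n$ but the whole tail $2\sum_{m>n}p_{m-n}w_m$, so the naive recursion is not finite: each equation couples $w_n$ to all higher $w_m$. I would resolve this by filtering $\bC[[\bs]]$ by total $s$-degree, assigning degree $1$ to every $s_{2k}$, and writing $w_n=\sum_{d\geq0}w_n^{(d)}$. Since $-\tfrac14$ is the only degree-$0$ contribution among the $p_j$, while each $p_j$ with $j\geq1$ and the piece $\half s_2$ inside $p_0$ have $s$-degree $1$, the dangerous couplings $2p_{m-n}w_m$ with $m>n$ and the term $\half s_2 w_n$ feed $w_n^{(d)}$ only from data of strictly lower degree $d-1$, whereas the quadratic term $\sum_{m+k=n-2}w_mw_k$ and the term $t\,w_{n-1}$ involve only strictly smaller index $m,k<n$. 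Thus at each fixed degree $d$ the system is triangular in the index $n$ with diagonal coefficient $-\half\neq0$, hence uniquely solvable, and induction on $d$ (with base case furnished by the $\tfrac{t^2}{4}\delta_{n,0}$ term) pins down every $w_n$. Together with the existence of \eqref{eqn:Special}, this shows \eqref{eqn:Special} is the unique solution.
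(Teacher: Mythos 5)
Your argument is correct, and it is worth noting that the paper itself supplies no proof at all here (it only remarks that the result is ``very easy to prove''), so there is nothing to compare line by line; what you have written is the natural argument that the author presumably had in mind, with the one genuinely non-obvious point made explicit. Your coefficient extraction is right: for $n\geq 0$ the coefficient of $x^{-n-2}$ in $y^2$ is $\tfrac{t^2}{4}\delta_{n,0}+\sum_{m+k=n-2}w_mw_k+2\sum_{m\geq n}p_{m-n}w_m+t\,w_{n-1}$, with $2p_0=s_2-\tfrac12$ the only place a degree-zero coefficient multiplies an unknown, and your filtration by total $s$-degree is exactly the device that converts the a priori non-triangular system (each equation involving all $w_m$ with $m>n$) into a genuine double induction: at fixed degree $d$ the terms $2\sum_{m>n}p_{m-n}w_m$ and $s_2w_n$ only see degree $d-1$ data, while $\sum_{m+k=n-2}w_mw_k$ and $t\,w_{n-1}$ only see indices $<n$, leaving $-\tfrac12 w_n^{(d)}$ on the diagonal. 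Two very minor points you could add for completeness: the infinite sum $\sum_{m>n}p_{m-n}w_m$ is a well-defined element of $\bC[[s_2,s_4,\dots]]$ because the $m$-th summand carries the variable $s_{2(m-n)+2}$, so any fixed monomial receives only finitely many contributions; and the existence half, which you correctly source from \eqref{eqn:y2minus}, is the only place the Virasoro constraints enter --- the recursion by itself already produces a unique formal solution, and the theorem's content is that this solution is the special deformation \eqref{eqn:Special}.
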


\subsection{Quantization}

The Virasoro constraints are given by the following differential operators:
\bea
&& L^{even}_0 =  \sum_{k\geq 1}
k (s_{2k}- \half \delta_{k,1}) \frac{\pd}{\pd s_{2k}}
+ \frac{t^2}{4g_s^2} - \frac{1}{16}, \\
&& L^{even}_n = g_s^2
\sum^{n-1}_{k=1} \frac{\pd^2}{\pd s_{2k} \pd s_{2n-2k}}
+ t \frac{\pd}{\pd s_{2n}} +  \sum_{k\geq 1}
k (s_{2k}-\half \delta_{k,1}) \frac{\pd}{\pd s_{2k+2n}}, \quad n \geq 1.
\eea
Consider the generating series of these Virasoro operators:
\ben
&& \sum_{n \geq 0} L^{even}_0 x^{-n-2} \\
& = & \biggl( \sum_{k\geq 1}
k (s_{2k}- \half \delta_{k,1}) \frac{\pd}{\pd s_{2k}}
+ \frac{t^2}{4g_s^2} - \frac{1}{16} \biggr) x^{-2}  \\
& + & \sum_{n\geq 1} \biggl( g_s^2
\sum^{n-1}_{k=1} \frac{\pd^2}{\pd s_{2k} \pd s_{2n-2k}}
+ t \frac{\pd}{\pd s_{2n}} +  \sum_{k\geq 1}
k (s_{2k}-\half \delta_{k,1}) \frac{\pd}{\pd s_{2k+2n}} \biggr) \frac{1}{x^{n+2}} \\
& = & \sum_{k \geq 1} \sum_{n \geq 0} \beta_{-k} \beta_{k+n} \frac{1}{x^{n+k}}
+ \frac{\beta_0^2}{x^2} -\frac{1}{16x^2}
+ \sum_{n \geq 1} \biggl( \sum_{k=1}^{n-1} \beta_k \beta_{n-k} + 2 \beta_0 \beta_n\biggr) \frac{1}{x^{n+2}}
\een
where
\begin{align}
\beta_{-k} & = \half g_s^{-1} k(s_{2k}-\half \delta_{k,1}) \cdot, &
\beta_0 & = \frac{t}{2g_s} \cdot, &
\beta_k & =  g_s\frac{\pd}{\pd s_{2k}},
\end{align}
$k > 0$.
They satisfy the Heisenberg commutation relations:
\be
[\beta_k, \beta_l] = \frac{k}{2} \delta_{k, -l}.
\ee
As usual we take $\{\beta_k\}_{k > 0}$ to be annihilators
and take $\{\beta_{-k}\}_{k > 0}$ to be creators,
and one can then define normally ordered products.
With these notations,
\be
\sum_{n \geq 0} L^{even}_0 x^{-n-2}
= :\hat{y}(x)^2: - \frac{1}{16} x^{-2},
\ee
where
\be
\hat{y}(x):=\sum_{k=1}^\infty \beta_{-k} x^{k-1}
+  \frac{\beta_0}{x}  + \sum_{k=1}^\infty \frac{\beta_k}{x^{k+1}},
\ee
Note we have the following well-known OPE:
\be
\hat{y}(z) \hat{y}(w) = :\hat{y}(z)\hat{y}(w): + \frac{1/2}{(z-w)^2}.
\ee
To account for the extra term $- \frac{1}{16} x^{-2}$,
we use the idea of regularized product in \cite{Zhou1}.
In this case one needs to consider the twisted field
\be
\hat{y}^{twist}(x):= x^{1/2} \cdot \hat{y}(x).
\ee
We have the following OPE for the twisted field:
\be
\hat{y}^{twist}(z) \hat{y}^{twist}(w) = :\hat{y}^{twist}(z)\hat{y}^{twist}(w): + \frac{z^{1/2}w^{1/2}}{2(z-w)^2}.
\ee
In particular,
\ben
&& \hat{y}^{twist}(x+\epsilon) \hat{y}^{twist}(x)  \\
& = & :\hat{y}^{twist}(x+\epsilon)\hat{y}^{twist}(x): + \frac{(x+\epsilon)^{1/2}x^{1/2}}{2\epsilon^2} \\
& = & :\hat{y}^{twist}(x+\epsilon)\hat{y}^{twist}(x):
+ \frac{x}{2\epsilon^2} + \frac{1}{4\epsilon} - \frac{1}{16x} + \cdots,
\een
where we have the following expansion:
\ben
(x+\epsilon)^{1/2}x^{1/2} & = & x(1+\epsilon/x)^{1/2}
= x \biggl(1 + \frac{1}{2} \frac{\epsilon}{x} - \frac{1}{8} \frac{\epsilon^2}{x^2} + \cdots\biggr).
\een
We defined the {\em regularized product} $\hat{y}^{twist}(x)^{\odot 2}
= \hat{y}^{twist}(x) \odot \hat{y}^{twist}(x)$ by:
\be
\hat{y}^{twist}(x)^{\odot 2}:
= \lim_{\epsilon\to 0} \biggl( \hat{y}^{twist}(x+\epsilon) \hat{y}^{twist}(x)
- \biggl(\frac{x}{2\epsilon^2} + \frac{1}{4\epsilon} \biggr) \biggr),
\ee
then we have
\be
\hat{y}^{twist}(x)^{\odot 2} = :\hat{y}^{twist}(x)\hat{y}^{twist}(x): - \frac{1}{16x}
= x \biggl(:\hat{y}(x)\hat{y}(x): - \frac{1}{16x} \biggr),
\ee
and so the Virasoro constraints can be reformulated in the following form:
\be
\biggl( \hat{y}^{twist}(x)^{\odot 2} Z\biggr)_{-} = 0.
\ee

\begin{rmk}
It seems to be more natural to consider instead:
\be
y^{twist}(x):=x^{1/2} y(x) = \half \sum_{k=1}^\infty
k(s_{2k}-\half \delta_{k,1}) x^{k-1/2}
+ \frac{t}{2x}  + \sum_{k=1}^\infty \frac{1}{x^{k+1/2}}
\frac{\pd F_0}{\pd s_{2k}}.
\ee
With this choice
\eqref{eqn:y2minus} becomes:
\be
((y^{twist})^2)_- = \biggl( \half t(s_{2}-\half)
+ \sum_{l \geq 1}(l+1) s_{2l+2}
\frac{\pd F_0}{\pd s_{2l}} \biggr) x^{-1},
\ee
and so
\ben
(y^{twist})^2 & = & \frac{x}{4}
(\sum_{k=1}^\infty k(s_{2k}-\half \delta_{k,1}) x^{k-1})^2 \\
& + & \frac{t}{2} \sum_{k=1}^\infty
k(s_{2k}-\half \delta_{k,1}) x^{k-1}
+ \sum_{l \geq 1} \sum_{k \geq l+1} k s_{2k}
\frac{\pd F_0}{\pd s_{2l}} x^{k-l-1}.
\een
When $s_{2k} = 0$,
\be
(y^{twist})^2=\frac{x}{16}-\frac{t}{4}.
\ee
However,
algebraic considerations force us to choose $y$.
It is well-known that if
\be
[L_m, L_n] = (m-n)L_{m+n} + \frac{(m^3-m)c}{12} \delta_{m,-n},
\ee
then the field
\be
L(z) = \sum_{n \in \bZ} L_n z^{-n-2}
\ee
satisfies the OPE:
\be
L(z)L(w) \sim \frac{L'(w)}{z-w}
+ \frac{2L(w)}{(z-w)^2} + \frac{c/2}{(z-w)^4}.
\ee
In other words,
$\sum_{n \geq 0} L_n^{even} x^{-n-2}
= :\hat{y}(x)^2:-\frac{1}{16x^2}$.
We are then forced by these algebraic considerations
to  make the seemingly unnatural choice of taking $\hat{y}(x)$ in the quantum case
and $y(x)$ in the classical case.
\end{rmk}

\section{Emergence of Topological Recursions}

In this Section we show that the recursion relations
in terms of operators $\tilde{D}$ and $\tilde{E}$
are Eynard-Orantin topological recursions
for the spectral curve discussed in last Section.

\subsection{Genus zero one-point function and the spectral curve}

Let us take $s_{2k} = 0$ in \eqref{eqn:Special} to get:
\be
y=- \frac{1}{4} + \frac{t}{2x} + \sum_{k=1}^\infty \frac{1}{x^{k+1}} \frac{\pd F_0}{\pd s_{2k}}\biggl|_{s_{2n}= 0, n\geq 1}.
\ee
By the definition of correlators and $G_{0,1}(x)$,
\be
y = - \frac{1}{4} + \frac{t}{2x} + G_{0,1}(x).
\ee
By the formula \eqref{eqn:G01} for $G_{0,1}$,
\be
y   = - \frac{1}{4} \sqrt{1-\frac{4t}{x}},
\ee
and so
\be \label{eqn:Spectral-MM}
y^2 = \frac{1}{16}- \frac{t}{4x}.
\ee
This recovers the formula \eqref{eqn:Spectral} for the spectral curve of the modified Hermitian one-matrix
model derived in last Section.

The coordinates of a point $p$ on the spectral curve
  is given by two holomorphic functions $x=x(p)$ and $y=y(p)$.
But this curve  is a rational curve in the plane,
it has a global coordinate given by $y$, and
\be
x = \frac{4t}{1-16y^2}.
\ee
There is a natural hyperelliptic structure on this curve: One can define an involution $
p \mapsto \sigma(p)$ by
\be
\sigma(x, y) = (x, -y).
\ee

\subsection{Correlation functions as functions on the spectral curve}

With the introduction of the spectral curve,
one can regard the genus $g$ $n$-point correlation functions
$G_{g,n}(x_1, \dots, x_n)$ as functions on it.
We understand $x$ and $y$ as meromorphic function on the spectral curve.
For a point $p_j$ on it,
we write
\begin{align}
x_j & = x(p_j), & y_j & = y(p_j).
\end{align}
In \S \ref{sec:Correlation} we have used local coordinates $x_1, \dots, x_n$.
The concrete examples computed there can now be translated into
functions in $y_1, \dots, y_n$:
\ben
&& G_{0,1}(y_1) =  \frac{1}{4} - \frac{t}{2x_1} + y_1 = \frac{1}{8} + y_1 +2 y_1^2
= \frac{1}{8} (4y+1)^2, \\
&& G_{0,2}(y_1, y_2) = \frac{1-\frac{2t}{x_1} - \frac{2t}{x_2}}{2 \cdot 4^2 \cdot (x_1-x_2)^2 y_1y_2}
- \frac{1}{2(x_1-x_2)^2} \\
&& \qquad\qquad\quad = \frac{1}{2^{14}} \frac{(1-16y_1^2)^2(1-16y_2^2)^2}{t^2y_1y_2(y_1+y_2)^2}, \\
&& G_{0,3}(y_1,y_2,y_3)
= - \frac{t^2}{2^{16} \cdot x_1^2x_2^2x_3^2(y_1y_2y_3)^3}\\
&& \qquad\qquad\quad = -\frac{1}{2^{28}t^4} \frac{(1-16y_1^2)^2(1-16y_2^2)^2(1-16y_3^2)^2}{y_1^3y_2^3y_3^3}, \\
&& G_{0,4}(y_1, \dots, y_4)
=  24t^2 \frac{e_4-2e_3t+32e_1t^3-256t^4}
{\prod\limits_{j=0}^3 x_j^3 (4y_j)^5}, \\
&& G_{1,1}(y_1) = \frac{1}{2^5x_1^2y_1}
- \frac{t^2}{2^9x_1^4y_1^5 } \\
&& \qquad\qquad
= \frac{(1-16y_1^2)^2}{2^9y_1} -  \frac{1}{2^{17}} \frac{(1-16y_1^2)^4}{t^2y_1^5}.
\een

Now we translate \eqref{eqn:G-g-n} into the following form:
\be  \label{eqn:G-g-n-y}
\begin{split}
&  G_{g,n+1}(y_0, y_1, \dots, y_n) \\
= & \sum_{j=1}^n \cD_{y_0,y_j} G_{g,n}(y_1, \dots, y_n)
 + \cE_{y_0, y,y'} G_{g-1,n+2}(y,y', y_1, \dots, y_n) \\
+ & \sum_{\substack{g_1+g_2=g\\I_1 \coprod I_2=[n]}}\;' \cE_{y_0, y,y'}
\biggl( G_{g_1, |I_1|+1}(y, y_{I_1}) \cdot G_{g_2, |I_2|+1}(y', y_{I_2}) \biggr)
- \frac{\delta_{g,1} \delta_{n,0}}{32y_0}  \frac{1}{x_0^2},
\end{split}
\ee
where $\cD_{p_0,p_j}$ and $\cE_{p_0, q, r}$ are some operators to be determined below.
Recall
\ben
&& \tilde{D}_{x_0,x_j}f(x_j)
=  \frac{2}{1-4W_{0,1}(x_0)} \biggl( \frac{x_0f(x_0) -x_j f(x_j)}{x_0(x_0-x_j)^2}
- \frac{1}{x_0(x_0-x_j)} \frac{d}{d x_j}
\big(x_jf(x_j) \big) \biggr), \\
&& \tilde{E}_{x_0, u,v} f(u,v) =  \frac{2}{1-4W_{0,1}(x_0)} \lim_{u \to v} f(u,v)|_{v = x_0}.
\een
It is clear that the operator $\tilde{E}$ can be translated into the following operator acting on functions in $y$:
\be
\cE_{y_0, y, y'} g(y, y') = - \frac{1}{2y_0} \lim_{y' \to y} f(y,y')|_{y = y_0}.
\ee
On the other hand,
\ben
&& \frac{2}{1-4W_{0,1}(x_0)} \biggl( \frac{x_0f(x_0) -x_j f(x_j)}{x_0(x_0-x_j)^2}
- \frac{1}{x_0(x_0-x_j)} \frac{d}{d x_j}
\big(x_jf(x_j) \big) \biggr) \\
& = & -\frac{1}{2y_0} \biggl[ \frac{\frac{4t}{1-16y_0^2} g(y_0) - \frac{4t}{1-16y_j^2} g(y_j) }{
{\frac{4t}{1-16y_0^2}\biggl(\frac{4t}{1-16y_0^2}-\frac{4t}{1-16y_j^2}\biggr)^2}} \\
& - & \frac{1}{ {\frac{4t}{1-16y_0^2}\biggl(\frac{4t}{1-16y_0^2}-\frac{4t}{1-16y_j^2}\biggr)}}
\cdot \frac{1}{\frac{dx_j}{dy_j}} \cdot \frac{d}{d y_j} \biggl( \frac{4t}{1-16y_j^2} g(y_j) \biggr) \biggr] \\
& = &- \frac{(1-16y_0^2)^2(1-16y_j^2)^2}{16384t^2y_0y_j }
\cdot \frac{2y_j (g(y_0) - g(y_j))- (y_0^2-y_j^2) g'(y_j)}{(y_0^2-y_j^2)^2},
\een
and so
\be
\cD_{y_0, y_j} g(y_j) = -\frac{(1-16y_0^2)^2(1-16y_j^2)^2}{2^{14}t^2y_0y_j }
\cdot \frac{2y_j (g(y_0) - g(y_j))- (y_0^2-y_j^2) g'(y_j)}{(y_0^2-y_j^2)^2}.
\ee
Note
\ben
&& \cE_{y_0, y, y'} \biggl( x^{-a}y^{-2b-1} \cdot x'^{-a'} y^{-2b'-1} \biggr)
= -\frac{1}{2} x^{-a-a'} y^{-2(b+b')-3}, \\
&& \cD_{y_0, y_j}  \biggl( x^{-a}y_j^{-2b-1} \biggr)
= \frac{(1-16y_0^2)^2(1-16y_j^2)^2}{2^{14}t^2y_0y_j } \\
&& \cdot \frac{2y_j \biggl( \biggl(\frac{1-16y_0^2}{4t} \biggr)^ay_0^{-2b-1}
- \biggl(\frac{1-16y_j^2}{4t} \biggr)^ay_j^{-2b-1} \biggr)- (y_0^2-y_j^2)
\frac{d}{dy_j} \biggl[\biggl(\frac{1-16y_j^2}{4t} \biggr)^ay_j^{-2b-1} \biggr].
}{(y_0^2-y_j^2)^2}
\een

\subsection{Examples}

We  now present some sample computations  of $G_{g,n}$ using \eqref{eqn:G-g-n-y}.

\subsubsection{Three-point function in genus zero}

\ben
&&  G_0(y_0, y_1, y_2) \\
& = & \sum_{j=1}^2 \cD _{y_0,y_j} G_0(y_1, y_2)
+ 2 \cE_{y_0, y,y'}
\biggl( G_0(y, y_1) \cdot G_0(y', y_2) \biggr) \\
& = & \sum_{j=1}^2 \tilde{D}_{y_0,y_j}
\biggl( \frac{1}{2^{14}} \frac{(1-16y_1^2)^2(1-16y_2^2)^2}{t^2y_1y_2(y_1+y_2)^2} \biggr) \\
& + & 2\tilde{E}_{y_0, y,y'}
\biggl(\frac{1}{2^{14}} \frac{(1-16y^2)^2(1-16y_1^2)^2}{t^2yy_1(y+y_1)^2} \\
&& \cdot \frac{1}{2^{14}} \frac{(1-16y'^2)^2(1-16y_2^2)^2}{t^2y'y_2(y'+y_2)^2} \biggr).
\een
After a complicated computation with the help of Maple,
the following simple formula is obtained:
\be
G_0(x_0,x_1,x_2)
= \frac{4t^2}
{x_0^2x_1^2x_2^2((1-4t/x_0)(1-4t/x_1)(1-4t/x_2))^{3/2}}.
\ee

\subsubsection{One-point function in genus one}

In this case \eqref{eqn:G-g-n-y} takes the form:
\ben
G_{1,1}(x_0)
& = &  \frac{1}{32x_0^2y_0} + \cE_{y_0, y,y'} G_{0,2}(y,y') \\
& = &  \frac{1}{32x_0^2y_0} - \frac{1}{2y_0}
\lim_{y' \to y} \biggl(  \frac{1}{2^{14}} \frac{(1-16y^2)^2(1-16y'^2)^2}{t^2yy'(y+y')^2}
\biggr)\biggr|_{y \to y_0} \\
& = & \frac{(1-16y_0^2)^2}{2^9t^2y_0} -  \frac{1}{2^{17}} \frac{(1-16y_0^2)^4}{t^2y_0^5}.
\een

\subsection{Multilinear differential forms}

Instead of the functions $G_{g,n}(p_1, \dots, p_n)$,
one can also consider the multilinear differential forms:
\be \label{def:Omega}
W_{g,n}(p_1, \dots, p_n)
= \hat{G}_{g,n}(y_1, \dots, y_n) dx_1 \dots dx_n,
\ee
where $\hat{G}_{g,n}(y_1, \dots, y_n)=G_{g,n}(y_1, \dots, y_n)$ except for the following two exceptional cases:
\bea
&& \hat{G}_{0,1}(y_1) = -\frac{1}{4}+ \frac{t}{2x_1} + G_{0,1}(y_1), \\
&& \hat{G}_{0,2}(y_1,y_2) = \frac{1}{(x_1-x_2)^2}  + G_{0,2}(y_1,y_2).
\eea
Since $\hat{G}_{0,1}(y_1) = y_1$, so we have:
\be \label{eqn:Omega-0-1}
W_{0,1}(p_1) = y_1dx_1.
\ee
By the following computations
\ben
W_{0,2}(p_1,p_2)
& = & \biggl( \frac{1-\half(1-16y_1^2)-\half(1-16y_2^2)}
{32\biggl(\frac{4t}{1-16y_1^2}-\frac{4t}{1-16y_2^2}\biggr)^2y_1y_2}
+ \frac{1}{2\biggl(\frac{4t}{1-16y_1^2}-\frac{4t}{1-16y_2^2}\biggr)^2} \biggr) \\
&& \cdot d\frac{4t}{1-16y_1^2}d\frac{4t}{1-16y_2^2} \\
& = & \frac{dy_1dy_2}{(y_1-y_2)^2},
\een
we get:
\be \label{eqn:Omega-0-2}
W_{0,2}(p_1,p_2)
= \frac{dy_1dy_2}{(y_1-y_2)^2}.
\ee

\subsection{The computation of the recursion kernel}

We use $W_{0,2}$ as the Bergman kernel.
Then
\ben
\int_{q=\sigma(p_2)}^{p_2} B(p_1, q)
& = & \int_{y=-y_2}^{y_2} \frac{dy_1dy}{(y_1-y)^2} \\
& = & \frac{dy_1}{y_1-y} \biggl|_{y=-y_2}^{y_2} = \frac{2y_2dy_1}{y_1^2-y_2^2}.
\een
It follows that
\be
K(p_0, p) = \frac{dy_0}{2(y_0^2-y^2)dx}= \frac{(1-16y^2)^2}{2^8ty(y_0^2-y^2)} \frac{dy_0}{dy}.
\ee
This has poles at $y=0$ and $y= \pm y_0$.
To understand its behavior at $y = \infty$,
let $y =1/z$.
Then
\be
K(p_0, p) = \frac{z(z^2-16)^2}{2^8t(1-y_0^2z^2)}
\frac{dy_0}{dz}.
\ee

\subsection{Eynard-Orantin topological recursions}

Note
\be
dy = \frac{tdx}{8x^2y}, \quad dx = \frac{2^7ty dy}{(1-16y^2)^2}.
\ee
Let us carry out the first few calculations of Eynard-Orantin recursion
for the spectral curve \eqref{eqn:Spectral-MM} with $\omega_{0,1}=W_{0,1}$ and $\omega_{0,2}=W_{0,2}$
given by \eqref{eqn:Omega-0-1} and \eqref{eqn:Omega-0-2} respectively.
\ben
&& \omega_{0,3}(p_0,p_1,p_2) \\
& = & \Res_{p \to p_+}  K(p_0, p)
[W_{0,2}(p, p_1) W_{0,2}(\sigma(p),p_{2})
+ W_{0,2}(p, p_2) W_{0,2}(\sigma(p), p_1)] \\
& = &  \Res_{y \to 0}  \frac{(1-16y^2)^2dy_0}{2^8ty(y_0^2-y^2)dy} \biggl(
\frac{dy dy_1}{y^2-y_1^2} \cdot \frac{-dy dy_2}{y^2-y_2^2}
+ \frac{dydy_2}{y^2-y_2^2} \cdot \frac{-dydy_1}{y^2-y_1^2} \biggr)  \\
& = &  -\frac{dy_0dy_1dy_2}{2^7ty_0^2y_1^2y_2^2}
= - \frac{t^2dx_0dx_1dx_2}{2^{16}x_0^2x_1^2x_2^2y_0^3y_1^3y_2^3} \\
& = & W_{0,3}(p_0, p_2, p_2).
\een

\ben
\omega_{1,1}(p_0) & = & \Res_{p\to p_+} K(p_0, p)
W_{0,2}(p, \sigma(p))  \\
& = & \Res_{y\to 0} \frac{(1-16y^2)^2dy_0}{2^8ty(y_0^2-y^2)dy} \cdot
\frac{-(dy)^2}{4y^2} \\
& = & - \frac{1- 2^5 y_0^2}{2^{10}t y_0^4} dy_0
=  - \frac{1- 2^5 y_0^2}{2^{10} t y_0^4} \cdot \frac{t}{8x_0^2y_0}dx_0  \\
& = & \biggl( \frac{1}{2^8x_0^2y_0^3} - \frac{1}{2^{13}x_0^2y_0^5} \biggr) dx_0\\
& = & \biggl( \frac{1}{2^9x_0^2y_0^3} - \frac{t}{2^{11}x_0^3y_0^5} \biggr) dx_0\\
& = & W_{1,1}(p_0).
\een
\ben
&& \omega_{0,4}(p_0,p_1,p_2,p_4) \\
& = & \Res_{p \to p_+}  K(p_0, p)
[W_{0,2}(p, p_1) W_{0,3}(\sigma(p),p_{2},p_3) \\
& + & W_{0,2}(p, p_2) W_{0,2}(\sigma(p), p_1, p_3)
+ W_{0,2}(p, p_3) W_{0,2}(\sigma(p), p_1, p_2) \\
& + & W_{0,3}(p, p_1, p_2) W_{0,2}(\sigma(p),p_3)
+ W_{0,3}(p, p_1,p_3) W_{0,2}(\sigma(p), p_2) \\
& + & + W_{0,3}(p, p_1,p_3) W_{0,2}(\sigma(p), p_2) ] \\
& = &  \Res_{p \to p_+}
\frac{dy_0}{(y_0^2-y^2)dx}
\biggl[ \frac{dy dy_1}{(y-y_1)^2}
 \cdot \frac{-t^2dx dx_2dx_3}{1024x^2x_2^2x_3^2(-y)^3y_2^3y_3^3} + perm. \\
& + &  \frac{-t^2dx dx_2dx_3}{2^{10}x^2x_2^2x_3^2y^3y_2^3y_32^3}
\cdot \frac{d(-y) dy_1}{(-y-y_1)^2}
 \cdot+ perm. \biggr] \\
& = &  \Res_{y \to 0}
\frac{dy_0}{(y_0^2-y^2)}
\biggl[ \frac{dy dy_1}{(y-y_1)^2}
 \cdot \frac{t^2 dx_2dx_3}{1024(\frac{2t}{1-4y^2})^2x_2^2x_3^2y^3y_2^3y_3^3} + perm. \\
& + &  \frac{t^2 dx_2dx_3}{1024(\frac{2t}{1-4y^2})^2x_2^2x_3^2y^3y_2^3y_3^3}
\cdot \frac{dy dy_1}{(y+y_1)^2} + perm. \biggr] \\
& = & 2 \cdot \frac{dy_0dy_1dx_2dx_3}{2^{12}x_2^2x_3^2y_2^3y_3^3}\cdot
\frac{-8y_0^2y_1^2+3y_0^2+y_1^2}{y_0^4y_1^4} + perm. \\
& = & \frac{t^2dx_0dx_1dx_2dx_3}{2^{15}x_0^2x_1^2x_2^2x_3^2y_0^5y_1^5y_2^5y_3^5}\cdot
y_2^2y_3^2(-8y_0^2y_1^2+3y_0^2+y_1^2) + perm. \\
& = & 3t^2 \cdot  \frac{e_4-e_3t+4e_1t^3-16t^4}{2^{20}\prod\limits_{j=0}^3 x_j^3 y_j^5}
dx_1\cdots dx_4.
\een

\begin{thm} \label{Thm:Main1}
The multi-linear differential forms $W_{g,n}(p_1, \dots, p_n)$
defined by \eqref{def:Omega} satisfy the Eynard-Orantin topological recursions
given by the spectral curve
\be
y^2 = \frac{1}{16} - \frac{t}{4x}.
\ee
I.e.,
we have
\be
W_{g,n}(p_1, \dots, p_n)
= \omega_{g,n}(p_1, \dots, p_n).
\ee
\end{thm}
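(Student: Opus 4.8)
The plan is to argue by induction on $2g-2+n$, exploiting that both families $\{W_{g,n}\}$ and $\{\omega_{g,n}\}$ are reconstructed from identical initial data: by \eqref{eqn:Omega-0-1} and \eqref{eqn:Omega-0-2} we have $W_{0,1}=y\,dx=\omega_{0,1}$, and $W_{0,2}=dy_1\,dy_2/(y_1-y_2)^2$ is precisely the Bergman kernel $\omega_{0,2}$ used to run the Eynard--Orantin recursion. Everything then reduces to showing that the operator recursion \eqref{eqn:G-g-n-y}, which the $W_{g,n}$ satisfy by construction, \emph{is} the residue recursion \eqref{eqn:EO} for this spectral curve. Equivalently, I would substitute the forms $W_{g',n'}$ with $2g'-2+n'<2g-2+n$ into the right-hand side of \eqref{eqn:EO} and verify by a direct residue computation that the output is the right-hand side of \eqref{eqn:G-g-n-y}.

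The geometric data are already assembled: the single branch point of $x$ is $y=0$, since $dx=2^7 t y\,dy/(1-16y^2)^2$ vanishes exactly there; the hyperelliptic involution is $\sigma\colon y\mapsto -y$; and the recursion kernel is $K(p_0,p)=\tfrac{(1-16y^2)^2}{2^8 t y(y_0^2-y^2)}\,dy_0/dy$. The decisive structural input is Proposition \ref{prop:Structure}: for $2g-2+n>0$ each $\hat G_{g,n}$ is odd in every variable $y_j$ with poles only of odd order at $y_j=0$. Since $dx_j$ is $\sigma$-invariant, this makes $W_{g,n}$ anti-invariant under $y_j\mapsto -y_j$, so that in every product $\omega_{h}(p,\cdot)\,\omega_{g-h}(\sigma(p),\cdot)$ entering \eqref{eqn:EO} only the part even in the integration variable $y$ survives against the factor $1/y$ in $K$. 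Consequently the only residue contributing to \eqref{eqn:EO} sits at $y=0$, even though $K$ also has poles at $y=\pm y_0$, and the whole right-hand side of \eqref{eqn:EO} reduces to a single Laurent extraction at $y=0$.

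With this in place I would match three groups of terms. The handle-degeneration term $\Res_{y\to 0}K(p_0,p)\,\omega_{g-1,n+2}(p,\sigma(p),p_{[n]})$ and the genuinely stable splittings $\omega_{g_1}(p,\cdot)\,\omega_{g_2}(\sigma(p),\cdot)$ should, after the residue, reproduce exactly the operators $\cE_{y_0,y,y'}G_{g-1,n+2}$ and $\sum'\cE_{y_0,y,y'}(G_{g_1}G_{g_2})$ of \eqref{eqn:G-g-n-y}: the coincidence $y'\to y$ encodes the pairing of $p$ with $\sigma(p)$, the subsequent evaluation at $y=y_0$ comes from the residue, and the overall normalization $-1/(2y_0)$ matches the prefactor $2/(1-4W_{0,1}(x_0))=-1/(2y_0)$ built into $\cE$. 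The anomalous term $-\delta_{g,1}\delta_{n,0}/(32 y_0 x_0^2)$ then arises from the regularization of the self-pairing $\omega_{0,2}(p,\sigma(p))$ in the sole unstable instance $(g,n)=(1,0)$, which is nothing but the $\omega_{1,1}$ computation already carried out above.

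The main obstacle is the remaining group, in which a Bergman factor $\omega_{0,2}(p,p_j)$ (or its $\sigma$-image) is contracted against the current form: these must collapse to the operators $\sum_j\cD_{y_0,y_j}$, and $\cD$ is not a bare residue but carries the difference quotient $\bigl(g(y_0)-g(y_j)\bigr)/(y_0^2-y_j^2)^2$ together with a derivative term in $g'(y_j)$. The hard part is to produce exactly this shape: expanding $\omega_{0,2}(p,p_j)=dy\,dy_j/(y-y_j)^2$ near $y=0$ exposes, through its double pole at $y=y_j$, both the value and the first $y_j$-derivative of the companion form, while the factor $1/(y_0^2-y^2)$ in $K$, expanded as a geometric series in $y^2/y_0^2$, supplies the $y_0$-dependence of the difference quotient once the residue at $y=0$ is taken; summing the two orderings and invoking the oddness of $\hat G_{g,n}$ to kill the even parts should yield precisely the defining expression for $\cD_{y_0,y_j}$. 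The unstable low-degree cases $(g,n+1)\in\{(0,3),(0,4),(1,1)\}$ involving two Bergman factors require the same bookkeeping done by hand, and the explicit computations of $\omega_{0,3}$, $\omega_{0,4}$, $\omega_{1,1}$ above confirm that it closes; once the general $\cD$-identification is established, the induction completes.
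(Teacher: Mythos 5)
Your proposal follows essentially the same route as the paper's proof: induction on $2g-2+n$ from the base cases $W_{0,1}=\omega_{0,1}$, $W_{0,2}=\omega_{0,2}$, matching the three groups of terms in the residue recursion \eqref{eqn:EO} against the operators $\cE$, $\cD$ and the anomaly term in \eqref{eqn:G-g-n-y}, with Proposition \ref{prop:Structure} supplying the structural input and the $\cD$-identification handled by an explicit residue identity. The one caveat worth recording is that the decisive content of Proposition \ref{prop:Structure} is not the oddness in each $y_j$ but the pole-order bounds $b_i\geq a_i-1$: the paper turns $\Res_{y\to 0}$ into the residues at $y=\pm y_0$ (whence the evaluation at $y_0$ inside $\cE$) by showing the residue at $y=\infty$ vanishes, and that vanishing --- equivalently, the resummation of your geometric series in $y^2/y_0^2$ into the closed form $(1-16y_0^2)^{a+a'-2}/y_0^{2b+2b'+2}$ rather than a truncation of it --- holds precisely because of those bounds, so your planned ``Laurent extraction at $y=0$'' must invoke them explicitly.
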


\begin{proof}
We have explicitly check the case of $\omega_{0,3}$ and $\omega_{1,1}$.
We now show that other case can be checked by induction.
By \eqref{eqn:EO} and the induction hypothesis,
\ben
&& \omega_{g,n+1}(p_0,p_1, \dots, p_n) =
\Res_{y\to 0} K(y_0, y)  \\
&& \cdot  \biggl[ \hat{G}_{g-1, n+2}(y, -y, y_{[n]}) dxdx   \\
& + & \sum^g_{h=0} \sum_{I \subset [n]}'
\hat{G}_{h,|I|+1}(y, y_I) \hat{G}_{g-h, n-|I|+1}(-y, y_{[n]-I}) dxdx \biggr] dx_1 \cdots dx_n.
\een
By \eqref{eqn:G-g-n-y} and \eqref{def:Omega},
\ben
&& W_{g,n+1}(y_0, \dots, y_n) \\
& = & G_g(y_0, y_1, \dots, y_n) dx_0 \cdots dx_n  \\
& = & \sum_{j=1}^n \cD_{y_0,y_j} G_g(y_1, \dots, y_n) \cdot dx_0 \cdots dx_n \\
& + & \cE_{y_0, y,y'} G_{g-1}(y,y', y_1, \dots, y_n)\cdot  dx_0 \cdots dx_n \\
& + & \sum_{\substack{g_1+g_2=g\\I_1 \coprod I_2=[n]}}\;' \cE_{y_0, y,y'}
\biggl( G_{g_1}(y, y_{I_1}) \cdot G_{g_2}(y', y_{I_2}) \biggr) \cdot  dx_0 \cdots dx_n \\
& - & \frac{\delta_{g,1} \delta_{n,0}}{32y_0}  \frac{1}{x_0^2} dx_0,
\een
By comparing these two recursion,
we see that it suffices to show that
\be \label{eqn:1}
\begin{split}
& \Res_{y\to 0} K(y_0, y)
\cdot  \biggl[ G_{g-1, n+2}(y, -y, y_{[n]}) dxdx  \biggr] \\
= & \cE_{y_0, y,y'} G_{g-1}(y,y', y_1, \dots, y_n) \cdot dx_0,
\end{split}
\ee
and when $(h, |I|+1) \neq (0,2)$ and $(g-h, n-|I|+1) \neq (0,2)$,
\be \label{eqn:2}
\begin{split}
& \Res_{y\to 0} K(y_0, y)  \biggl[
G_{h,|I|+1}(y, y_I) G_{g-h, n-|I|+1}(-y, y_{[n]-I}) dxdx \biggr] \\
= & \cE_{y_0, y,y'} \biggl[
G_{h,|I|+1}(y, y_I) G_{g-h, n-|I|+1}(y', y_{[n]-I}) \biggr] \cdot dx_0,
\end{split}
\ee
and furthermore
\be \label{eqn:3}
\begin{split}
& \Res_{y\to 0} K(y_0, y)  \biggl[
\hat{G}_{0,2}(y, y_j)dxdx_j \cdot G_{g, n}(-y, y_1, \dots, \hat{y}_j, \dots,y_n) dx \biggr] \\
+ & \Res_{y\to 0} K(y_0, y)  \biggl[
\hat{G}_{0,2}(-y, y_j)dxdx_j G_{g, n}(y, y_1, \dots, \hat{y}_j, \dots,y_n) dx \biggr] \\
= & \biggl(\cD_{y_0, y_j} \biggl[ G_{g, n}(y_1, \dots, y_n) \biggr] \\
+ &\cE_{y_0, y,y'} \biggl[G_{0,2}(y,y_j)G_{g,n}(y', y_1, \dots, \hat{y}_j, \dots,y_n) \\
& \quad \quad + G_{g,n}(y, y_1, \dots, \hat{y}_j, \dots,y_n) G_{0,2}(y',y_j)\biggr] \biggr)\cdot dx_0dx_j.
\end{split}
\ee

We now prove \eqref{eqn:1} and \eqref{eqn:2} by Cauchy's residue theorem.
Indeed,
\ben
&& \Res_{y\to 0} K(y_0, y) \cdot  \biggl[ G_{g-1, n+2}(y, -y, y_{[n]}) dxdx  \biggr]\\
&= & \Res_{y\to 0}
\frac{(1-16y^2)^2}{2^8ty(y_0^2-y^2)} \frac{dy_0}{dy}  G_{g-1, n+2}(y, -y, y_{[n]}) dxdx \\
& = & \Res_{y\to 0} \biggl[
\frac{(1-16y^2)^2}{2^8ty(y_0^2-y^2)} \frac{dy_0}{dy} \\
&& \cdot
\sum_{\substack{a, a', a_1, \dots, a_n \geq 2\\b,b', b_1, \dots, b_n  }}
A^{g-1,n+2}_{a,a',a_1, \dots, a_n; b,b',b_1, \dots, b_n}(t) x^{-a} y^{-2b-1} \cdot x^{-a'} (-y)^{-2b'-1} \\
&& \cdot \prod_{i=1}^n x_i^{-a_i} y_i^{-2b_i-1}
\cdot \biggl( \frac{2^7ty dy}{(1-16y^2)^2} \biggr)^2 \biggr] \\
& = & -\sum_{\substack{a, a', a_1, \dots, a_n \geq 2\\b,b', b_1, \dots, b_n  }}
A^{g-1,n+2}_{a,a',a_1, \dots, a_n; b,b',b_1, \dots, b_n}(t)
\cdot \prod_{i=1}^n x_i^{-a_i} y_i^{-2b_i-1}   \\
&& \cdot \frac{dy_0}{2^{2a+2a'-6}t^{a+a'-1}} \Res_{y\to 0} \biggl[
\frac{(1-16y^2)^{a+a'-2}}{y^{2b+2b'+1}(y_0^2-y^2)}   dy \biggr],
\een
where $b \geq a-1$, $b' \geq a'-1$, $b_i \geq a_i-1$, $i=1, \dots, n$ by Proposition \ref{prop:Structure}.
By Cauchy residue theorem,
the residue at $y=0$ can be computed by the residue at $y= \pm y_0$
and the residue at $y = \infty$.
Because
\ben
\frac{(1-16y^2)^{a+a'-2}}{y^{2b+2b'+1}(y_0^2-y^2)}   dy
& = & \frac{(1-16/z^2)^{a+a'-2}}{z^{-(2b+2b'+1)}(y_0^2-\frac{1}{z^2})}   d\frac{1}{z} \\
& = & z^{2b+2b'-2a-2a'+5} \frac{(z^2-16)^{a+a'-2}}{1-y_0^2z^2} dz,
\een
and $b \geq a-1$, $b' \geq a'-1$, the residue at $y = \infty$ vanishes.
We also have
\ben
&& \Res_{y\to y_0} \biggl[
\frac{(1-16y^2)^{a+a'-2}}{y^{2b+2b'+1}(y_0^2-y^2)}   dy \biggr]
+ \Res_{y\to -y_0} \biggl[
\frac{(1-16y^2)^{a+a'-2}}{y^{2b+2b'+1}(y_0^2-y^2)}   dy \biggr] \\
&=  & \frac{(1-16y_0^2)^{a+a'-2}}{y_0^{2b+2b'+2}} .
\een
Therefore,
\ben
&& \Res_{y\to 0} K(y_0, y) G_{g-1, n+2}(y, -y, y_{[n]}) dxdx \\
& = &  -\sum_{\substack{a, a', a_1, \dots, a_n \geq 2\\b,b', b_1, \dots, b_n \geq 0}}
A^{g-1,n+2}_{a,a',a_1, \dots, a_n; b,b',b_1, \dots, b_n}(t)
\cdot \prod_{i=1}^n x_i^{-a_i} y_i^{-2b_i-1}   \\
&& \cdot \frac{dy_0}{2^{2a+2a'-6}t^{a+a'-1}} \frac{(1-16y_0^2)^{a+a'-2}}{y_0^{2b+2b'+2}}  \\
& = & -\sum_{\substack{a, a', a_1, \dots, a_n \geq 2\\b,b', b_1, \dots, b_n \geq 0}}
A^{g-1,n+2}_{a,a',a_1, \dots, a_n; b,b',b_1, \dots, b_n}(t)
\cdot \prod_{i=1}^n x_i^{-a_i} y_i^{-2b_i-1}   \\
&& \cdot   \frac{dx_0}{2x_0^{a+a'}y_0^{2b+2b'+3}}  \\
& = & \cE_{y_0, y,y'} G_{g-1}(y,y', y_1, \dots, y_n) \cdot dx_0.
\een
In the same fashion one  can prove \eqref{eqn:2}.

We now only need to show that
\ben
&& \Res_{y\to 0} K(y_0, y)  \biggl[
\hat{G}_{0,2}(y, y_j) G_{g, n}(-y, y_1, \dots, \hat{y}_j, \dots,y_n) dxdx \biggr]dx_j \\
& + & \Res_{y\to 0} K(y_0, y)  \biggl[
\hat{G}_{0,2}(-y, y_j) G_{g, n}(y, y_1, \dots, \hat{y}_j, \dots,y_n) dxdx \biggr] dx_j\\
& = & \biggl(\cD_{y_0, y_j} \biggl[ G_{g, n}(y_1, \dots, y_n) \biggr]
+ \cE_{y_0, y,y'} \biggl[G_{0,2}(y,y_j)G_{g,n}(y', y_1, \dots, \hat{y}_j, \dots,y_n) \\
&& +  G_{g,n}(y, y_1, \dots, \hat{y}_j, \dots,y_n) G_{0,2}(y',y_j)\biggr] \biggr)\cdot dx_0dx_j.
\een
Indeed,
the left-hand side of  \eqref{eqn:3} is
\ben
&&  \Res_{y\to 0} \biggl[
\frac{(1-16y^2)^2}{2^8ty(y_0^2-y^2)} \frac{dy_0}{dy} \cdot \frac{dydy_j}{(y-y_j)^2} \\
&& \cdot
\sum_{\substack{a_1, \dots, a_n \geq 2\\b_1, \dots, b_n \geq 0}}
A^{g,n}_{a_1, \dots, a_n; b_1, \dots, b_n}(t) x^{-a_j} (-y)^{-2b_j-1}
 \cdot \prod_{i \neq j} x_i^{-a_i} y_i^{-2b_i-1}
\cdot  \frac{2^7ty dy}{(1-16y^2)^2}   \biggr] \\
& + &  \Res_{y\to 0} \biggl[
\frac{(1-16y^2)^2}{2^8ty(y_0^2-y^2)} \frac{dy_0}{dy} \cdot \frac{d(-y)dy_j}{(-y-y_j)^2} \\
&& \cdot
\sum_{\substack{a_1, \dots, a_n \geq 2\\b_1, \dots, b_n \geq 0}}
A^{g,n}_{a_1, \dots, a_n; b_1, \dots, b_n}(t) x^{-a_j} y^{-2b_j-1}
 \cdot \prod_{i \neq j} x_i^{-a_i} y_i^{-2b_i-1}
\cdot  \frac{2^7ty dy}{(1-16y^2)^2}   \biggr] \\
& = & -\half \sum_{\substack{a_1, \dots, a_n \geq 2\\b_1, \dots, b_n \geq 0}}
A^{g,n}_{a_1, \dots, a_n; b_1, \dots, b_n}(t) \prod_{i \neq j} x_i^{-a_i} y_i^{-2b_i-1} \\
&& \cdot \Res_{y\to 0} \biggl[
\frac{x^{-a_j} y^{-2b_j-1}}{  (y_0^2-y^2)}  \cdot \frac{dy }{(y-y_j)^2}
+ \frac{x^{-a_j} y^{-2b_j-1}}{  (y_0^2-y^2)}  \cdot \frac{dy }{(-y-y_j)^2}      \biggr] dy_jdy_0,
\een
so the computation of the left-hand is reduced to the computation of
\ben
A & = & -\half \Res_{y\to 0} \biggl[
\frac{x^{-a_j} y^{-2b_j-1}}{  (y_0^2-y^2)}  \cdot \frac{dy }{(y-y_j)^2}
+ \frac{x^{-a_j} y^{-2b_j-1}}{  (y_0^2-y^2)}  \cdot \frac{dy }{(-y-y_j)^2} \biggr].
\een
In the same way,
the right-hand side is reduced to
\ben
B & = & \biggl( \cD_{y_0, y_j} \biggl[ x_j^{-a_j} y_j^{-2b_j-1}  \biggr]
+ \cE_{y_0, y,y'} \biggl[G_{0,2}(y,y_j) x'^{-a_j} y'^{-2b_j-1} \\
&& +  x^{-a_j} y^{-2b_j-1} G_{0,2}(y',y_j)\biggr] \biggr) \cdot \frac{dx_0}{dy_0} \cdot \frac{dx_j}{dy_j} \\
& = & \biggl( \cD_{y_0, y_j} \biggl[ x_j^{-a_j} y_j^{-2b_j-1}  \biggr] \\
& + & \cE_{y_0, y,y'} \biggl[ \frac{1}{2^{14}} \frac{(1-16y^2)^2(1-16y_j^2)^2}{t^2yy_j(y+y_j)^2}
\cdot  x'^{-a_j} y'^{-2b_j-1} \\
& + &  x^{-a_j} y^{-2b_j-1} \cdot  \frac{1}{2^{14}}
\frac{(1-16y'^2)^2(1-16y_j^2)^2}{t^2y'y_j(y'+y_j)^2} \biggr]  \biggr) \cdot \frac{dx_0}{dy_0}
\cdot \frac{dx_j}{dy_j}  \\
& = & \biggl(- \frac{(1-16y_0^2)^2(1-16y_j^2)^2}{2^{14}t^2y_0y_j } \\
&& \cdot \frac{2y_j \biggl(x_0^{-a_j} y_0^{-2b_j-1}
- x_j^{-a_j}y_j^{-2b_j-1} \biggr)- (y_0^2-y_j^2)
\frac{d}{dy_j} \biggl[x_j^{-a_j}y_j^{-2b_j-1} \biggr]}{(y_0^2-y_j^2)^2} \\
& - &  \frac{2}{2y_0}  \frac{1}{2^{14}} \frac{(1-16y_0^2)^2(1-16y_j^2)^2}{t^2y_0y_j(y_0+y_j)^2}
\cdot  x_0^{-a_j} y_0^{-2b_j-1}  \biggr)\cdot
\frac{2^{14}t^2y_0y_j}{(1-16y_0^2)^2(1-16y_j^2)^2}  \\
& = &  - \frac{2y_j \biggl(x_0^{-a_j} y_0^{-2b_j-1}
- x_j^{-a_j}y_j^{-2b_j-1} \biggr)- (y_0^2-y_j^2)
\frac{d}{dy_j} \biggl[x_j^{-a_j}y_j^{-2b_j-1} \biggr]}{(y_0^2-y_j^2)^2} \\
& - &   \frac{1}{y_0(y_0+y_j)^2}
\cdot  x_0^{-a_j} y_0^{-2b_j-1}.
\een
and so we only need to show that
\be
A = B.
\ee
This can be reduced to the following identity:
\ben
&& \half \Res_{y\to 0} \biggl[
\frac{y^{-2n-1}}{  (y_0^2-y^2)}  \cdot \frac{dy }{(y-y_j)^2}
+ \frac{ y^{-2b-1}}{  (y_0^2-y^2)}  \cdot \frac{dy }{(-y-y_j)^2} \biggr] \\
& = &  \frac{2y_j \biggl(y_0^{-2n-1} - y_j^{-2n-1} \biggr)- (y_0^2-y_j^2)
\frac{d}{dy_j} \biggl[y_j^{-2n-1} \biggr]}{(y_0^2-y_j^2)^2}
 +  \frac{1}{y_0(y_0+y_j)^2} \cdot y_0^{-2n-1},
\een
for $ n \geq -1$.
But this is very easy to prove.
Note
\ben
&& \half \biggl( \frac{1}{  (y_0^2-y^2)}  \cdot \frac{1 }{(y-y_j)^2}
+ \frac{1}{  (y_0^2-y^2)}  \cdot \frac{1 }{(-y-y_j)^2} \biggr)  \\
& = & \frac{1}{y_0^2-y^2} \cdot \frac{y_j^2+y^2 }{(y_j^2-y^2)^2}
= \sum_{n=0}^\infty \frac{y^{2n}}{y_0^{2n+2}y_j^{2n+2}} \sum_{k=0}^n (2n+1-2j)y_0^{2n-2k}y_j^{2k},
\een
therefore, the left-hand side is equal to
\ben
&&  \frac{1}{y_0^{2n+2}y_j^{2n+2}} \sum_{k=0}^n (2n+1-2j)y_0^{2n-2k}y_j^{2k}.
\een
The right-hand side is equal to
\ben
&& \frac{2y_j \biggl(y_0^{-2n-1} - y_j^{-2n-1} \biggr)  + (2n+1) (y_0^2-y_j^2) y_j^{-2n-2} }{(y_0^2-y_j^2)^2}
 +  \frac{1}{y_0(y_0+y_j)^2} \cdot y_0^{-2n-1} \\
& = & \frac{1}{y_0^{2n+2}y_j^{2n+2}(y_0^2-y_j^2)^2}
\cdot \biggl(2y_0y_j^2 ( y_j^{2n+1}-y_0^{2n+1}) + (2n+1) (y_0^2-y_j^2) y_0^{2n+2} \\
& + & (y_0-y_j)^2 y_j^{2n+2} \biggr) \\
& = & \frac{1}{y_0^{2n+2}y_j^{2n+2}(y_0^2-y_j^2)^2}
\cdot \biggl(  (2n+1) y_0^{2n+4}-(2n+3)y_j^2 y_0^{2n+2} \\
& + & y_0^2 y_j^{2n+2}   + y_j^{2n+4}\biggr) \\
& = & \frac{1}{y_0^{2n+2}y_j^{2n+2}} \sum_{k=0}^n (2n+1-2j)y_0^{2n-2k}y_j^{2k}.
\een
This completes the proof.
\end{proof}

 \section{Relationship with Intersection Numbers}

In this Section we relate the $n$-point functions ofthe modified GUE partition function
with even couplings to intersection numbers.

\subsection{Local Airy coordinate near the branchpoint}

Recall the involution of the spectral curve is given by $\sigma: (x, y) \to (x, -y)$.
It has only one fixed point: $(x, y) = (4t, 0)$.
One can introduce the local Airy coordinate $\zeta$ so that
\be
x = 4t + \zeta^2.
\ee
In other words,
\be \label{def:zeta-x-y}
\zeta^2 = x - 4t = \frac{64ty^2}{1-16y^2}.
\ee
We consider the following expansion of the corrected genus zero one-point function:
\ben
y & = & - \biggl(\frac{1}{16}- \frac{t}{4(4t+\zeta^2)} \biggr)^{1/2}
= \pm \biggl(\frac{\zeta^2}{16(4t+\zeta^2)} \biggr)^{1/2}
= \frac{\zeta}{8t^{1/2}(1+\zeta^2/(4t))^{1/2}} \\
& = & \frac{1}{8t^{1/2}} \sum_{n=0}^\infty (-1)^n \frac{(2n-1)!!}{n!} \frac{\zeta^{2n+1}}{2^{3n}t^n}.
\een
This means the variables $t_{2n+3}$ are given by
\be
t_{2n+3} =  (-1)^n \frac{(2n-1)!!}{n!} \frac{1}{2^{3n+3}t^{n+1/2}}.
\ee
For example,
\begin{align*}
t_3 & =  \frac{1}{2^3t^{1/2}}, &
t_5 & = - \frac{1}{2^6t^{3/2}}.
\end{align*}
The conjugate variables $\tilde{t}_k$ are then given by
\be
e^{- \sum_k \tilde{t}_k u^{-k}}
=  \sum_{n \geq 0}  (-1)^n \frac{(2n + 1)!!(2n-1)!!}{n!} \frac{1}{2^{3n+3}t^{n+1/2}} u^{-n}.
\ee

\subsection{Bergman kernel in local Airy coordinate}

Next we consider the expansion of genus zero two-point function
in the local Airy coordinate:
\ben
&& B(p_1,p_2) - \frac{d\zeta_1d\zeta_2}{(\zeta_1-\zeta_2)^2} \\
& = & \biggl( \frac{1-\frac{2t}{x_1}-\frac{2t}{x_2}}{8(x_1-x_2)^2y_1y_2}
+ \frac{1}{2(x_1-x_2)^2} \biggr) dx_1dx_2
- \frac{d\zeta_1d\zeta_2}{(\zeta_1-\zeta_2)^2} \\
& = & \biggl( \frac{1-\frac{2t}{4t + \zeta^2_1}-\frac{2t}{4t + \zeta^2_2}}
{8(\zeta^2_1-\zeta^2_2)^2
\biggl(\frac{\zeta^2_1}{4(4t+\zeta_1^2)} \biggr)^{1/2}
\biggl(\frac{\zeta^2_2}{4(4t+\zeta_2^2)} \biggr)^{1/2}}
+ \frac{1}{2(\zeta^2_1-\zeta^2_2)^2} \biggr) d(4t + \zeta^2_1)d(4t + \zeta^2_2) \\
& - & \frac{d\zeta_1d\zeta_2}{(\zeta_1-\zeta_2)^2} \\
& = &  \biggl(
\frac{(2t\zeta_1^2+2t\zeta_2^2+\zeta_1^2\zeta_2^2)}{2\zeta_1\zeta_2
(\zeta_1^2-\zeta_2^2)^2(4t+\zeta_1^2)^{1/2} (4t+\zeta_2^2)^{1/2}}
+  \frac{1}{2(\zeta^2_1-\zeta^2_2)^2} \biggr) 4\zeta_1\zeta_2 d\zeta_1d\zeta_2 \\
& - & \frac{d\zeta_1d\zeta_2}{(\zeta_1-\zeta_2)^2} \\
& = & \frac{1}{(\zeta^2_1-\zeta^2_2)^2} \biggl(
\frac{2(2t\zeta_1^2+2t\zeta_2^2+\zeta_1^2\zeta_2^2)}{
(4t+\zeta_1^2)^{1/2} (4t+\zeta_2^2)^{1/2}}
- (\zeta_1^2+\zeta_2^2)
\biggr) d\zeta_1d\zeta_2 \\
& = & \frac{1}{(\zeta^2_1-\zeta^2_2)^2} \biggl(
\frac{(\zeta_1^2+\zeta_2^2+\zeta_1^2\zeta_2^2/(2t))}{
(1+\zeta_1^2/(4t))^{1/2} (1+\zeta_2^2/(4t))^{1/2}}
- (\zeta_1^2+\zeta_2^2)
\biggr) d\zeta_1d\zeta_2.
\een
Now we use  the same method used in the proof of \eqref{eqn:G02-2}.
We let $s=1/t$ and $\frac{\pd}{\pd s}$ on the right-hand side of the last equality to get:
\ben
&& - \frac{1}{8(1+\zeta_1^2s/4)^{3/2}(1+\zeta_2^2s/4)^{3/2}} \\
& = & - \frac{1}{8} \sum_{m=0}^\infty \frac{(2m+1)!!}{2^mm!} (-\zeta_1^2s/4)^m \cdot \sum_{n=0}^\infty
\frac{(2n+1)!!}{2^nn!} (-\zeta_2^2s/4)^n,
\een
and so after integration we get:
\be
B(p_1,p_2) = \frac{d\zeta_1d\zeta_2}{(\zeta_1-\zeta_2)^2}
+  \sum_{l=1}^\infty \frac{(-1)^l}{2^{3l}lt^l} \sum_{m+n=l-1} \frac{(2m+1)!!\zeta_1^{2m}}{m!} \cdot
 \frac{(2n+1)!! \zeta_2^{2n}}{n!}.
\ee
In other words,
\be
B_{2k,2l} = \frac{(-1)^{k+l+1}}{2^{3k+3l+3}(k+l+1)t^{k+l+1}} \frac{(2k+1)!!}{k!}\frac{(2l+1)!!}{l!}.
\ee
By \eqref{eqn:hat-Bkl},
\be
\hat{B}_{k,l} = \frac{(-1)^{k+l+1}}{2^{4k+4l+4}(k+l+1)t^{k+l+1}} \frac{(2k+1)!!(2k-1)!!}{k!}
\frac{(2l+1)!!(2l-1)!!}{l!}.
\ee

\subsection{The variables $\zeta_n$}
By \eqref{eqn:zeta-k} we then have
\be
\zeta_k(z) = \frac{(2k - 1)!!}{2^k} \biggl(
\frac{1}{\zeta(z)^{2k+1}}
- \sum_l \frac{(-1)^{k+l+1}(2k+1)!!(2l+1)!! }{2^{3k+3l+3}(k+l+1)k!l!t^{k+l+1}}
 \frac{\zeta(z)^{2l+1}}{2l + 1}  \biggr).
\ee
In particular,
\be
\zeta_0(z) =
\frac{1}{\zeta(z)}
- \sum_l \frac{(-1)^{l+1}(2l+1)!! }{2^{3l+3}(l+1)!t^{l+1}}
 \frac{\zeta(z)^{2l+1}}{2l + 1} .
\ee
The summation over $l$ can be carried out easily to get:
\be
\zeta_0(z) = \frac{(1+\zeta(z)/(4t))^{1/2} }{\zeta(z)} = \frac{1}{8t^{1/2}y(z)}
= - \frac{1}{2t^{1/2} \sqrt{1-\frac{4t}{x}}}.
\ee
We then also have
\be
d \zeta_0(z)   = \frac{t^{1/2}}{x^2(1-\frac{4t}{x})^{3/2}}dx.
\ee

In general,
to take the summation over $l$ in the expression for $\zeta_k$,
we introduce:
\ben
g_k(w, z) &= & \sum_l \frac{(-1)^{k+l+1}(2k+1)!!(2l+1)!! }{2^{3k+3l+3}(k+l+1)k!l!t^{k+l+1}}
 \frac{\zeta(z)^{2l+1}}{2l + 1}  w^{k+l+1}.\emph{}
\een
Then we have
\ben
\frac{\pd}{\pd w} g_k(w,z)
& = & \sum_l \frac{(-1)^{k+l+1}(2k+1)!!(2l+1)!! }{2^{3k+3l+3}k!l!t^{k+l+1}}
 \frac{\zeta(z)^{2l+1}}{2l + 1}  w^{k+l} \\
& = &  \frac{(-1)^{k+1}(2k+1)!! }{2^{3k+3}k!t^{k+1}}\zeta(z) w^k \cdot
\sum_l \frac{(-1)^l(2l+1)!! }{2^{3l}l!t^l}
\frac{\zeta(z)^{2l}}{2l + 1}  w^{l} \\
& = &  \frac{(-1)^{k+1}(2k+1)!! }{2^{3k+3}k!t^{k+1}}\zeta(z)\cdot  w^k \cdot
(1+w\zeta(z)^2/(4t))^{-1/2}.
\een
Now we integrate over $w$ using Lemma \ref{lm:Integration} below
\ben
g_k(w,z)
& = &  \frac{(-1)^{k+1}(2k+1)!! }{2^{3k+3}k!t^{k+1}}\zeta(z) \\
&& \cdot
\biggl(\frac{\sqrt{1+w\zeta(z)^2/(4t)}}{(k+1)\binom{2k+2}{k+1}(\zeta(z)^2/(16t))^{k+1}}
   \sum_{j=0}^k (-1)^{k-j} \binom{2j}{j} (w\zeta(z)^2/(16t))^j \\
& - & \frac{(-1)^k}{(k+1)\binom{2k+2}{k+1}(\zeta(z)^2/(16t))^{k+1}} \biggr) \\
& = &
\frac{(-1)^{k+1}\sqrt{1+w\zeta(z)^2/(4t)}}{\zeta(z)^{2k+1}}
   \sum_{j=0}^k (-1)^{k-j} \binom{2j}{j} (w\zeta(z)^2/(16t))^j
+ \frac{1}{\zeta(z)^{2k+1}} ,
\een
and then set $w=1$ to get:
\be
\begin{split}
\zeta_k(z)& = \frac{(2k - 1)!!}{2^k}
\cdot \frac{\sqrt{1+\zeta(z)^2/(4t)}}{\zeta(z)^{2k+1}}
\sum_{j=0}^k (-1)^{j} \binom{2j}{j}
\frac{\zeta(z)^{2j}}{2^{4j}t^j}.
\end{split}
\ee
Using \eqref{def:zeta-x-y},
we express $\zeta_k$ as a polynomial in $y^{-1}$:
\be
\zeta_k(z) = \frac{(2k - 1)!!}{2^kt^k}
\cdot \frac{1}{8t^{1/2}y}
\sum_{j=0}^k (-1)^{j} \binom{2j}{j}
\frac{1}{2^{4j}}
\biggl(\frac{1}{64y^2} - \frac{1}{4} \biggr)^{k-j},
\ee
and we can also express it as a power series in $x^{-1}$ using:
\be
\begin{split}
\zeta_k(z)& = - \frac{(2k - 1)!!}{2^k}
\cdot  \sum_{j=0}^k (-1)^{j} \binom{2j}{j}
\frac{1}{2^{4j+1}t^{j+1/2}x^{k-j}(1-\frac{4t}{x})^{k-j+1/2}}.
\end{split}
\ee
In particular,
\ben
\zeta_1(z)
& = & -\frac{1}{2}
\cdot  \sum_{j=0}^1 (-1)^{j} \binom{2j}{j}
\frac{1}{2^{4j+1}t^{j+1/2}x^{1-j}(1-\frac{4t}{x})^{3/2-j}} \\
& = & \frac{1}{2^5t^{3/2}(1-\frac{4t}{x})^{1/2}}
- \frac{1}{2^2t^{1/2}x(1-\frac{4t}{x})^{3/2}},
\een
and so
\be
d \zeta_1(z)  =
\biggl(\frac{3}{16t^{1/2}x^2(1-\frac{4t}{x})^{3/2}}
+ \frac{3t^{1/2}}{2x^3(1-\frac{4t}{x})^{5/2}}\biggr)
dx.
\ee

\begin{lm} \label{lm:Integration}
For $n \geq 0$,
\be
\int \frac{x^n}{\sqrt{1+4ax}} dx
=  \frac{\sqrt{1+4ax}}{(n+1)\binom{2n+2}{n+1}a^{n+1}}
\sum_{j=0}^n (-1)^{n-j} \binom{2j}{j} a^jx^j.
\ee
\end{lm}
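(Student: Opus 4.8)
The plan is to prove the formula by differentiating the candidate antiderivative on the right-hand side and checking that its derivative equals the integrand $x^n/\sqrt{1+4ax}$; since an antiderivative is unique up to an additive constant, this is enough. To organize the computation I would write the right-hand side as $F_n(x) = C_n^{-1}\sqrt{1+4ax}\,P_n(x)$, where $P_n(x) = \sum_{j=0}^n (-1)^{n-j}\binom{2j}{j}a^j x^j$ and $C_n = (n+1)\binom{2n+2}{n+1}a^{n+1}$. Applying the product rule gives $F_n'(x) = C_n^{-1}\bigl[2a(1+4ax)^{-1/2}P_n(x) + (1+4ax)^{1/2}P_n'(x)\bigr]$, and placing everything over the common factor $\sqrt{1+4ax}$ reduces the target equality $F_n'(x) = x^n/\sqrt{1+4ax}$ to the purely polynomial identity $2aP_n(x) + (1+4ax)P_n'(x) = C_n x^n$.

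The core of the argument is then to compare coefficients of $x^m$ on both sides of this polynomial identity. For each $m$ with $0 \le m \le n-1$ exactly three monomials contribute: the term $x^{m}$ coming from $P_n'$ at index $j=m+1$, and the terms at index $j=m$ coming from $4ax\,P_n'$ and from $2aP_n$. Collecting them produces the expression $(-1)^{n-m}a^{m+1}\bigl[-(m+1)\binom{2m+2}{m+1} + (4m+2)\binom{2m}{m}\bigr]$, which vanishes by the elementary identity $(m+1)\binom{2m+2}{m+1} = (4m+2)\binom{2m}{m}$; this in turn is immediate from $\binom{2m+2}{m+1} = \frac{2(2m+1)}{m+1}\binom{2m}{m}$. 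For the top degree $m=n$ the $P_n'$ contribution drops out and only the two $j=n$ terms survive, yielding $(4n+2)\binom{2n}{n}a^{n+1}$, which equals $C_n$ by the same identity taken at $m=n$. This confirms the polynomial identity term by term and hence the lemma.

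There is no genuine obstacle here: the entire content is carried by the single binomial identity above, and the rest is bookkeeping of three contributions to each coefficient. The one point requiring a little care is keeping the index shifts straight (the $j=m+1$ contribution from the undifferentiated power versus the $j=m$ contributions after multiplying by $x$), so I would lay out the coefficient of $x^m$ explicitly before invoking the identity.

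As an alternative that derives rather than merely verifies the closed form, I would use integration by parts with $u=x^n$ and $dv = dx/\sqrt{1+4ax}$, so that $v=\sqrt{1+4ax}/(2a)$, together with the rewriting $\sqrt{1+4ax} = (1+4ax)/\sqrt{1+4ax}$. Writing $I_n = \int x^n/\sqrt{1+4ax}\,dx$, this produces the recursion $(2n+1)I_n = \frac{x^n\sqrt{1+4ax}}{2a} - \frac{n}{2a}I_{n-1}$, and the stated closed form follows by induction on $n$ from the base case $I_0 = \sqrt{1+4ax}/(2a)$; the inductive step again reduces to the same binomial identity. Either route is short, and I would present the differentiation argument as the main proof.
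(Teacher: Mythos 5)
Your proposal is correct and follows essentially the same route as the paper: the paper's proof also differentiates the right-hand side and reduces the verification to the binomial identity $(2+4j)\binom{2j}{j}=(j+1)\binom{2j+2}{j+1}$, which is exactly the identity you isolate. The coefficient bookkeeping and the top-degree check both match the paper's computation.
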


\begin{proof}
This is very easy to verify: Simply take derivatives in $x$ on both sides.
The right-hand side gives us
\ben
&& \frac{\sqrt{1+4ax}}{(n+1)\binom{2n+2}{n+1}a^{n+1}} \sum_{j=0}^n (-1)^{n-j} \binom{2j}{j} ja^jx^{j-1} \\
& + & \frac{2a}{(n+1)\binom{2n+2}{n+1}a^{n+1}\sqrt{1+4ax}} \sum_{j=0}^n (-1)^{n-j} \binom{2j}{j} a^jx^{j} \\
& = & \frac{1}{(n+1)\binom{2n+2}{n+1}a^{n+1}\sqrt{1+4ax}}
\biggl( 2a \sum_{j=0}^n (-1)^{n-j} \binom{2j}{j} a^jx^{j} \\
&& + (1+4ax) \sum_{j=0}^n (-1)^{n-j} \binom{2j}{j} ja^jx^{j-1} \biggr) \\
& = & \frac{(-1)^n}{(n+1)\binom{2n+2}{n+1}a^{n+1}\sqrt{1+4ax}} \\
&& \cdot \biggl(\sum_{j=0}^n (-1)^j \binom{2j}{j} (2+4j)a^{j+1}x^{j}
- \sum_{j=0}^{n-1} (-1)^j \binom{2j+2}{j+1} (j+1)a^{j+1}x^{j} \biggr) \\
& = & \frac{x^n}{\sqrt{1+4ax}}.
\een
\end{proof}

\begin{remark}
The numbers $n\binom{2n}{n}$ are the sequence A005430 on \cite{OEIS}.
They are called the Ap\'ery numbers.
\end{remark}
 
\subsection{Relationship with intersection numbers}

By combining Theorem \ref{Thm:Main1} with Eynard's result recalled in \S \ref{sec:Eynard},
we get the following:

\begin{thm} \label{Thm:Main2}
For the modified partition function of Hermitian one-matrix model with even couplings,
\be
\begin{split}
& W_{g,n}(z_1, \dots, z_n) \\
= & 2^{3g-3+n} \sum_{d_1+\cdots +d_n \leq 3g-3+n}
\prod_i
d\zeta_{d_i}(z_i) \Corr{e^{\frac{1}{2} \sum_\delta \iota_{\delta*} \hat{B}(\psi, \psi')}
e^{\sum_k \tilde{t}_k \kappa_k} \prod_i \psi_i^{d_i}}_{g,n}
\end{split}
\ee
where $\tilde{t}_k$ are given by:
\be
e^{- \sum_k \tilde{t}_k u^{-k}} 
=  \sum_{n \geq 0}  (-1)^n \frac{(2n + 1)!!(2n-1)!!}{n!} \frac{1}{2^{3n+3}t^{n+1/2}} u^{-n},
\ee
$\hat{B}_{k,l}$ are given by 
\be
\hat{B}_{k,l} = \frac{(-1)^{k+l+1}}{2^{4k+4l+4}(k+l+1)t^{k+l+1}} \frac{(2k+1)!!(2k-1)!!}{k!}
\frac{(2l+1)!!(2l-1)!!}{l!},
\ee
and $\zeta_k$ are given by:
\be
\begin{split}
\zeta_k(z)& = - \frac{(2k - 1)!!}{2^k}
\cdot  \sum_{j=0}^k (-1)^{j} \binom{2j}{j}
\frac{1}{2^{4j+1}t^{j+1/2}x^{k-j}(1-\frac{4t}{x})^{k-j+1/2}}.
\end{split}
\ee
\end{thm}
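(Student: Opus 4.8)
The plan is to obtain Theorem \ref{Thm:Main2} as an immediate consequence of Theorem \ref{Thm:Main1} combined with Eynard's formula \eqref{eqn:TR-Int}. Theorem \ref{Thm:Main1} has already identified the multilinear differentials $W_{g,n}$ assembled from the modified GUE correlators with the Eynard--Orantin differentials $\omega_{g,n}$ attached to the spectral curve \eqref{eqn:Spectral-MM}. Since \eqref{eqn:TR-Int} expresses any such $\omega_{g,n}$ (for $2g-2+n>0$) through intersection numbers on $\overline{\mathcal{M}}_{g,n}$ in terms of the Laplace-transformed spectral data $\tilde{t}_k$, $\hat{B}_{k,l}$ and the basis differentials $d\zeta_{d_i}$, the entire content of the proof reduces to computing these three pieces of data explicitly for the present curve and checking that they match the closed forms in the statement.

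First I would confirm that the spectral curve meets the hypotheses under which \eqref{eqn:TR-Int} is valid, namely that $x$ has a single nondegenerate critical point. This is visible from the local Airy coordinate $\zeta$ with $x=4t+\zeta^2$, which exhibits the unique branchpoint at $(x,y)=(4t,0)$ with $x''\neq 0$ there, while the Bergman kernel $W_{0,2}=\frac{dy_1\,dy_2}{(y_1-y_2)^2}$ has the required double-pole normalization \eqref{eqn:B-Expansion}. I would then carry out the three Laplace-transform computations in turn. For $\tilde{t}_k$ I would expand $\omega_{0,1}=y\,dx$ in the local Airy coordinate to read off the Taylor coefficients $t_{2n+3}$ of $y$ in $\zeta$, and feed them into the defining relation $e^{-\sum_k \tilde{t}_k u^{-k}}=\sum_k (2k+1)!!\,t_{2k+3}\,u^{-k}$ of \S\ref{sec:Eynard}, producing the stated generating series. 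For $\hat{B}_{k,l}$ I would expand the corrected $W_{0,2}$ in $\zeta_1,\zeta_2$ to obtain $B_{2k,2l}$ and then apply \eqref{eqn:hat-Bkl}; the cleanest route, exactly as in the proof of \eqref{eqn:G02-2}, is to differentiate the kernel in $s=1/t$ so that the double series factorizes into a product of two binomial series, and then integrate back. For the differentials $d\zeta_k$ I would start from \eqref{eqn:zeta-k}, substitute the computed $B_{2k,l}$, and resum over $l$ using Lemma \ref{lm:Integration}, which collapses the Ap\'ery-type series into the displayed closed form and yields the expression as a power series in $x^{-1}$.

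The main obstacle is not conceptual but organizational: one must keep the normalization conventions of Eynard's Laplace transforms precisely aligned with those recalled in \S\ref{sec:Eynard}, since the powers of $2$, the products of double factorials $(2k+1)!!(2k-1)!!$, and the alternating signs proliferate and are easy to misplace. The one genuinely nontrivial analytic step is the resummation defining $\zeta_k$, and this is precisely what Lemma \ref{lm:Integration} is designed to handle. Once that lemma is applied, every remaining identity is a matching of elementary generating functions, and the theorem follows by substituting $W_{g,n}=\omega_{g,n}$ from Theorem \ref{Thm:Main1} into \eqref{eqn:TR-Int} with the three computed inputs.
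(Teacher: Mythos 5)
Your proposal is correct and follows essentially the same route as the paper: Theorem \ref{Thm:Main2} is obtained by substituting $W_{g,n}=\omega_{g,n}$ from Theorem \ref{Thm:Main1} into Eynard's formula \eqref{eqn:TR-Int}, with the three spectral inputs computed exactly as you describe --- $\tilde{t}_k$ from the Taylor expansion of $y$ in the Airy coordinate, $\hat{B}_{k,l}$ from the $s=1/t$ differentiation trick applied to the corrected Bergman kernel together with \eqref{eqn:hat-Bkl}, and $\zeta_k$ by resumming \eqref{eqn:zeta-k} via Lemma \ref{lm:Integration}. The paper additionally cross-checks the result against the explicit formulas for $\omega_{0,3}$, $\omega_{1,1}$ and $\omega_{0,4}$, but this is verification rather than a different argument.
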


Let us now check some example.
By \eqref{eqn:omega03} we have
\ben
\omega_{0,3} (z_1, z_2, z_3)
& = & \frac{1}{2t_3}
d\zeta_0(z_1) d\zeta_0(z_2) d\zeta_0(z_3) \\
& = & \frac{1}{2 \cdot \frac{1}{2^3t^{1/2}} }
\frac{t^{1/2}}{x_1^2(1-\frac{4t}{x_1})^{3/2}}dx_1
\cdot \frac{t^{1/2}}{x_2^2(1-\frac{4t}{x_2})^{3/2}}dx_2
\cdot \frac{t^{1/2}}{x_3^2(1-\frac{4t}{x_3})^{3/2}}dx_3 \\
& = & \frac{4t^2}
{x_0^2x_1^2x_2^2((1-4t/x_0)(1-4t/x_1)(1-4t/x_2))^{3/2}}
dx_1dx_2dx_3.
\een
This matches with \eqref{eqn:G-03}.
By \eqref{eqn:omega11}
\ben
\omega_{1,1}(z) & = & \frac{1}{24 t_3} d\zeta_1(z)
+ \biggl(\frac{B_{0,0}}{4t_3}- \frac{t_5}{16t_3^2}
\biggr) d\zeta_0(z) \\
& = & \frac{1}{24 \cdot \frac{1}{2^3t^{1/2}}} \cdot
\biggl( \frac{3}{16t^{1/2}x^2(1-\frac{4t}{x})^{3/2}}
+ \frac{3t^{1/2}}{2x^3(1-\frac{4t}{x})^{5/2}} \biggr)dx\\
& + & \biggl( \frac{-\frac{1}{2^3t} }{4\cdot \frac{1}{2^3t^{1/2}}}
- \frac{- \frac{1}{2^6t^{3/2}}}{16\cdot (\frac{1}{2^3t^{1/2}})^2 } \biggr)
\cdot  \frac{t^{1/2}}{x^2(1-\frac{4t}{x})^{3/2}}dx \\
& = & \biggl( \frac{1}{16x^2(1-\frac{4t}{x})^{3/2}}
+ \frac{t}{2x^3(1-\frac{4t}{x})^{5/2}}
- \frac{3}{16x^2(1-\frac{4t}{x})^{3/2}} \biggr) dx \\
& = & \biggl(- \frac{1}{8x^2(1-\frac{4t}{x})^{3/2}}
+ \frac{t}{2x^3(1-\frac{4t}{x})^{5/2}} \biggr) dx.
\een
This matches with \eqref{eqn:G11}.

\ben
\omega_{0,4}(z_1, z_2, z_3, z_4)
&=& \frac{1}{2t^2_3}
(d\zeta_1(z_1)d\zeta_0(z_2)d\zeta_0(z_3)d\zeta_0(z_4)
+ perm.) \\
& + & \frac{3}{4} \biggl(\frac{B_{0,0}}{t_3^2} - \frac{t_5}{t_3^3} \biggr)
d\zeta_0(z_1)d\zeta_0(z_2)d\zeta_0(z_3)d\zeta_0(z_4) \\
& = &
\frac{1}{2 \cdot (\frac{1}{2^3t^{1/2}})^2 } \cdot
\biggl( \frac{3}{16t^{1/2}x_1^2(1-\frac{4t}{x_1})^{3/2}}
+ \frac{3t^{1/2}}{2x_1^3(1-\frac{4t}{x_1})^{5/2}} \biggr)dx_1\\
&& \cdot  \frac{t^{1/2}}{x_2^2(1-\frac{4t}{x_2})^{3/2}}dx_2
\cdot  \frac{t^{1/2}}{x_3^2(1-\frac{4t}{x_3})^{3/2}}dx_3
\cdot  \frac{t^{1/2}}{x_4^2(1-\frac{4t}{x_4})^{3/2}}dx_4 \\
& + & perm. \\
& + & \frac{3}{4} \biggl( \frac{-\frac{1}{2^3t} }{(\frac{1}{2^3t^{1/2}})^2}
- \frac{- \frac{1}{2^6t^{3/2}}}{ (\frac{1}{2^3t^{1/2}})^3 } \biggr)
\cdot  \prod_{j=1}^4 \frac{t^{1/2}}{x_j^2(1-\frac{4t}{x_j})^{3/2}}dx_j \\
& = & 6t^2 (x_1+4t)(x_2-4t)(x_3-4t)(x_4-4t)
  \prod_{j=1}^4 \frac{dx_j}{x_j^3(1-\frac{4t}{x_j})^{5/2}} \\
& + & perm. \\
& = & 24t^2 \frac{e_4-2e_3t+32e_1t^3-256t^4}
{\prod\limits_{j=0}^3 x_j^3 (1- \frac{4t}{x_j})^{5/2}} dx_1 \cdots dx_4,
\een
where $e_j$ denotes the $j$-th elementary symmetric polynomial in $x_1, \dots, x_4$.
This matches with \eqref{eqn:G04}.

\vspace{.2in}
{\bf Acknowledgements}.
The author is partly supported by NSFC grants 11661131005 and 11890662.

\end{document}